\documentclass[AMS,STIX1COL]{WileyNJD-v2}

\articletype{RESEARCH ARTICLE}%


\raggedbottom

\usepackage{amsmath,amssymb,amsfonts}
\usepackage{graphicx}
\usepackage{rotating}
\usepackage{float}
\usepackage{color}
\usepackage{subfigure}
\begin{document}

\title{Adaptive tracking control for non-periodic reference signals under quantized observations}

\author[1]{Chuiliu Kong}

\author[2]{Ying Wang}


\authormark{KONG \textsc{et al}}

\address[1]{\orgdiv{Research Center for Mathematics and Interdisciplinary Sciences}, \orgname{Shandong University}, \orgaddress{\state{Qingdao, Shandong Province, 266237}, \country{China}}}

\address[2]{\orgdiv{Key Laboratory of Systems and Control, Institute of Systems Science, Academy of Mathematics and Systems Science}, \orgname{Chinese Academy of Sciences}, \orgaddress{\state{Beijing 100190}, \country{China}}}


\corres{*Ying Wang,  \email{wangying96@amss.ac.cn}}


\abstract[Summary]{	This paper considers an adaptive tracking control problem for stochastic regression systems with multi-threshold quantized observations. Different from the existing studies for periodic reference signals, the reference signal in this paper is non-periodic. Its main difficulty is how to ensure that the designed controller satisfies the uniformly bounded and excitation conditions that guarantee the convergence of the estimation in the controller under non-periodic signal conditions. This paper designs two backward-shifted polynomials with time-varying parameters and a special projection structure, which break through periodic limitations and establish the convergence and tracking properties. To be specific, the adaptive tracking control law can achieve asymptotically optimal tracking for the non-periodic reference signal; Besides, the proposed estimation algorithm is proved to converge to the true values in almost sure and mean square sense, and the convergence speed can reach $O\left(\frac{1}{k}\right)$ under suitable conditions. Finally, the effectiveness of the proposed adaptive tracking control scheme is verified through a simulation.}

\keywords{Adaptive tracking control, non-periodic reference signals, multi-threshold quantized observations.}




\maketitle


\section{Introduction}\label{sec1}

Since the actual control systems usually suffer from uncertainties stemming from parameters, structure, and environment (e.g., system aging, component failures, and external disturbances), the development of adaptive control theory provides a method to deal with the above uncertainties. In addition, owing to the limited communication ability and computation ability of the sensors, the outputs of the actual control systems are often with finite precision, i.e., it is only known whether the output belongs to a set, not the exact value (e.g., \cite{Wang-Kim-Sun},  \cite{Wang-Zhang-Yin} and \cite{Wang-Hu-Zhao-D}). Therefore, considering the actual problem requirements, the quantized adaptive control techniques provide an effective approach to control problems involving quantized measurements.

Usually, the design of adaptive control is based on the estimation of unknown system parameters to obtain the desired system performance. For the difference of the unknown parameters to be estimated in the controller, the design of adaptive control is subdivided into direct adaptive method and indirect adaptive method. In detail, in the direct adaptive method (e.g., \cite{Wen-Tao}, \cite{Zhang-Tao} and \cite{Zhang-Zhang}), we directly estimate the unknown parameters included in the designed controller. Whereas in the indirect adaptive method (e.g., \cite{Chen-Guo-1}-\cite{Guo-2},  and \cite{Li}-\cite{Wang-Lai}, \cite{Zhang-Guo}, \cite{Chen-Guo}), the original unknown parameters in the system are directly estimated and then the adaptive control is designed by the above estimations. This means that the parameters required in the former method do not necessarily have to estimate the true value of the unknown parameter in the system, whereas that true value is required in the latter method. For example, in the framework of the direct adaptive control, Zhang et al. \cite{Zhang-Tao,Zhang-Zhang} proposed an implicit function-based adaptive control scheme for the discrete-time neural-network systems and nonlinear systems in a general non-canonical form, respectively. Wen et al. \cite{Wen-Tao} studied the convergence of the output tracking error for nonlinear adaptive control systems. It should be noted that they consider deterministic systems. Moreover, in the framework of the indirect adaptive, there are many classic methodologies about stochastic adaptive control such as stochastic gradient algorithm-based adaptive control method \cite{Chen-Guo-1,Peter-Lafortune} and least-squares algorithm-based self-tuning regulator \cite{Guo-Chen,Guo-1}, and so on. Numerous adaptive control theories have been derived from the aforementioned methods and techniques, such as \cite{Guo, Guo-2, Zhang-Guo, Chen-Guo}. In this paper, the tracking problem is studied in the indirect adaptive framework. The reason is that on the one hand, the original unknown parameters in the system can be estimated accurately, on the other hand, the designed adaptive control based on the estimations of the unknown parameters is more robust.

It is worth noting that the aforementioned methodologies are constructed based on the accurate information of system output, or output with measurement noises. However, as mentioned above, obtaining the exact outputs of the systems may be not achievable, and in some cases only quantized measurements of the system outputs can be obtained. These quantized measurements inherently contain measurement uncertainty about the system outputs. Therefore, the research goal of this paper is how to use the quantized observations to design an adaptive control law to achieve accurate control of the systems, which falls into the category of robust control. The existing research on adaptive control with quantized measurements has been somewhat studied but is still relatively lacking. For example, for first-order stochastic regression systems with binary-valued observations, Guo et al. in \cite{Guo-Zhang-Zhao} designed an adaptive tracking control and finally achieved the asymptotically optimal tracking for deterministic reference signal sequence. Further, for the high-order stochastic regression systems with binary-valued observations, Zhao et al. in \cite{Zhao-Guo-Zhang} proposed a two-scale algorithm and achieved the asymptotically optimal tracking for the periodic reference signal. In this algorithm, on the one hand, the designed adaptive control is required to last at a long time scale. On the other hand, the unknown parameter is estimated by the empirical measurement method \cite{Wang-Zhang-Yin} at a small time scale, which is an offline identification algorithm. Since adaptive control in the above two-scale algorithm is limited to remain constant over time, it results in a possible reduction of the tracking speed. Therefore, to further optimize the tracking speed, with the help of an online recursive projection identification algorithm in \cite{Guo-Zhao-1}, Wang et al. in \cite{Wang-Hu-Zhao} designed another adaptive tracking control law without waiting time and finally achieved the desired tracking effect for the periodic reference signal. For the case of system outputs that are quantized and also contain observation uncertainty, Kong et al. in \cite{Kong} investigated an adaptive tracking control problem for periodic reference signals and obtained the effect of observation uncertainty on the designed algorithm. Furthermore, a detailed description of the quantized system identification theory can be referred to \cite{Wang-Yin-Zhang-Zhao}.

At present, by summarizing the existing research as mentioned above, it is easy to see that there are two kinds of shortcomings in the tracking control problem for stochastic regression systems containing quantized observations: one is that the system model studied is relatively simple, that is, the system model is first-order \cite{Guo-Zhang-Zhao}, the other is that the reference signal is limited to periodicity \cite{Kong, Wang-Hu-Zhao, Zhao-Guo-Zhang}. The reason for considering the quantized tracking control problem in the above two types of frameworks is to facilitate the solution of the following two problems, which are precisely the ones that need to be overcome in the framework of this paper, i.e.,

$\bullet$ The adaptive tracking control design is coupled with the parameter estimation algorithm \cite{Guo-2}; 

$\bullet$ Losing the periodicity condition, in contrast to  \cite{Kong, Wang-Hu-Zhao, Zhao-Guo-Zhang}, this paper faces difficulties in ensuring that controllers designed in accordance with the certainty equivalent principle satisfy both persistent excitation and uniformly bounded conditions.

Therefore, this paper is concerned with the quantized tracking control problem where the stochastic regression system is two-order and the reference signal is non-periodic at the same time, whose model is relatively more general than the first-order quantized system in \cite{Guo-Zhang-Zhao} and reference signals are more general than the periodic case in \cite{Kong, Wang-Hu-Zhao, Zhao-Guo-Zhang}. This paper designs two backward-shifted polynomials with time-varying parameters and a special projection structure to overcome the aforementioned two difficulties. However, it also should be noted that this control scheme cannot be directly used to handle the quantized system with higher-order cases. It is because, for higher-order quantized systems, there remains some difficulty in ensuring that adaptive tracking control law designed by the certainty equivalence principle satisfies the uniformly bounded condition. Actually, this paper is the first attempt to address the tracking problem for non-periodic reference signals under higher-order quantized systems, and provides referential ideas for further solving the tracking control problem under more general quantized system models. The main contributions of this paper can be summarized as follows.
\vspace{4pt}

i) Model and Problem aspects: On the one hand, the outputs of the stochastic regression model considered in this paper are quantized by a multi-threshold quantizer, which introduces a strong non-linearity to the system. On the other hand, the adaptive tracking control problem considered in this paper is for the non-periodic reference signals. This research breaks through the limitations of previous literature on first-order system model \cite{Guo-Zhang-Zhao} and periodic reference signals (e.g., \cite{Kong}, \cite{Wang-Hu-Zhao} and \cite{Zhao-Guo-Zhang}). It makes the system models and the research results studied in this paper to be extended to a wider scope.

\vspace{4pt}

ii) Method aspect: In this paper, we design two backward-shifted polynomials with time-varying parameters and a special projection structure to make the designed adaptive tracking control law satisfy the persistent excitation and uniformly bounded conditions, which are used to guarantee the convergence properties of the estimation algorithm. Compared with \cite{Guo}, the system outputs considered in this paper are quantized by a multi-threshold quantizer.

\vspace{4pt}

iii) Result aspect: The adaptive tracking control scheme designed in this paper not only ensures that the estimations of the unknown parameters converge to the true values but most importantly also achieves the asymptotically optimal tracking effect for the non-periodic reference signals. In addition, the mean square convergence speed for the unknown parameter estimation can also reach $O\left(\frac{1}{k}\right)$ as that in \cite{Wang-Hu-Zhao}.

\vspace{4pt}

The rest of this article is organized as follows. In Section 2, we introduce the tracking control problem with multi-threshold quantized outputs. In Section 3, an adaptive tracking control scheme is presented. In Section 4, the main results including the convergence, convergence speed, and the performance of adaptive tracking control are given. In Section 5, a simulation is given to verify the above main conclusions. Finally, we summarize the conclusions of this article in Section 6.

\textbf{Notations.}
First of all, we define some notations. Let $\|\cdot\|$ be the Euclidean norm, $\lfloor x\rfloor =\mbox{max}\{a\in\mathbb{Z}|a\leq x\}$ and $\lceil x\rceil=\mbox{min}\{a\in\mathbb{Z}|a\geq x\}$ be the largest integer less than or equal to $x$ and the smallest integer large than or equal to $x$ for any $ x\in\mathbb{R}$, respectively. For two real number sequence $\{a_{1,k},\ k\geq 1\}$ and $\{a_{2,k},\ k\geq 1\}$, one writes $a_{1,k}=O(a_{2,k})$ if and only if there exist a positive real number $M_0$ and an integer $k_0$ such that $|a_{1,k}|\leq M_0|a_{2,k}|$ for all $k\geq k_0$.
\section{Problem formulation}
In this section, we consider the following stochastic regression model:
\begin{equation}\label{multi-system}
	y(k)=\phi^{\intercal}(k)\theta+w(k),\ k=1,2,...
\end{equation}
where $\phi(k)\stackrel{\vartriangle}{=}[u(k),u(k-1)]^{\intercal}\in\mathbb{R}^{2}$, $u(s)=0$, for $s\leq0$, $\theta\stackrel{\vartriangle}{=}[\theta(1),\theta(2)]^{\intercal}\in\mathbb{R}^{2}$, and $\{w(k)\}_{k\geq1}$ are the control input, unknown parameter and system noise, respectively. $y(k)\in\mathbb{R}$ is the measured output, which can not be exactly measured and only its quantized observation can be obtained by a sensor with multi-threshold $C_p$, for $p=0,1,...,m$, and $C_0<C_1<C_2<\cdots<C_m<C_{m+1}$. Then, the quantized observation is
\vspace{-10pt}
\begin{equation}\label{q-information}
	{S}(k)=\sum_{p=0}^{m}pI_{\{C_p< y(k)\leq C_{p+1}\}},
\end{equation}
\vspace{-1pt}
where $C_0=-\infty,\ C_{m+1}=+\infty$ and the indicator function $I_{\{C_p< y(k)\leq C_{p+1}\}}$ is defined by
\begin{equation*}
	I_{\{C_p< y(k)\leq C_{p+1}\}}=
	\left\{
	\begin{aligned}
		&1,\quad\mbox{if}\ y(k)\in(C_p,C_{p+1}],\\
		&0,\quad\mbox{otherwise}.
	\end{aligned}
	\right.
\end{equation*}
\vspace{-1pt}
Next, some necessary assumptions are introduced. Firstly, set $\{y^{*}(k)\}_{k\geq1}$ as the known reference signal, which satisfies the following two assumptions.

\begin{assumption}\label{ass-reference-signal}
	The reference signal $\{y^{*}(k)\}_{k\geq1}$ is a bounded sequence, i.e., $|y^{*}(k)|\leq\bar{y}^{*}$, where $\bar{y}^{*}$ is a given positive constant.
\end{assumption}
\begin{assumption}\label{ass-multi-reference-signal}
	Denote $\mathcal{Y}(i){=}[y^{*}(i),y^{*}(i-1)]^{\intercal}$. There exist constants $h> 2$ and $\delta_y>0$ such that
	\begin{equation*}
		\lambda_{\min}\left(\sum_{i=k+1}^{k+h-1}\mathcal{Y}(i)\mathcal{Y}^{\intercal}(i)\right)>\delta_y,\ k=1,2,...
	\end{equation*}
\end{assumption}

\begin{remark}
	The purpose of this paper is to design an adaptive tracking control $\{u(k)\}_{k\geq 1}$ to drive the output $\{y(k)\}_{k\geq1}$ to follow the non-periodic reference signal $\{y^{*}(k)\}_{k\geq1}$. Therefore, the model and problem considered in this paper are fundamentally different from the previous literature on first-order system model \cite{Guo-Zhang-Zhao} and periodic reference signals (e.g., \cite{Kong}, \cite{Wang-Hu-Zhao} and \cite{Zhao-Guo-Zhang}). 
\end{remark}

Secondly, in order to proceed with our analysis, some other assumptions about the unknown parameter and noises are introduced as follows.
\begin{assumption}\label{ass-minimum-phase-condition}
	There exists a constant $\mu\in(0,1)$ such that $|Q(\omega)|>\mu$ for $|\omega|\leq 1$ with $Q(\omega)=\theta(1)+\theta(2)\omega$. 
\end{assumption}

\begin{remark}\label{re-1}
	If $\mu=0$, then Assumption \ref{ass-minimum-phase-condition} turns to be the minimum phase condition, which is necessary for adaptive
	tracking problem. Moreover, Assumption \ref{ass-minimum-phase-condition} implies that the unknown parameter $\theta(1)\neq0$ and $\left|\frac{\theta(1)}{\theta(2)}\right|>1$.
\end{remark}

\begin{assumption}\label{ass-multi-bounded}
	Let $\Theta$ denote the set of parameters satisfy Assumption \ref{ass-minimum-phase-condition}. The parameter $\theta$ belongs to a convex compact set $\Omega$ and $\Omega$ is the subset of $\Theta$, i.e., $\theta\in\Omega\subseteq\Theta$. Besides, $\bar{M}=\sup_{v\in\Omega}\|v\|$, $0<\underline{\theta}\leq|\theta(1)|$ and $|\theta(2)|<\bar{\theta}$, $\underline{\theta}$ and $\bar{\theta}$ are constants and $\underline{\theta}>\bar{\theta}>0$.
\end{assumption}

\begin{remark}\label{re-2}
	Let us give an example of Assumptions \ref{ass-minimum-phase-condition}-\ref{ass-multi-bounded}. For example, let $\underline{\omega}\stackrel{\vartriangle}{=}\underline{\theta}I_{\{sgn(\theta(1))>0\}}-\bar{M}I_{\{sgn(\theta(1))<0\}}$ and $\bar{\omega}\stackrel{\vartriangle}{=}\bar{M}I_{\{sgn(\theta(1))>0\}}-\underline{\theta}I_{\{sgn(\theta(1))<0\}}$, if the sign of the unknown parameter $\theta(1)$ is known, then the set $\Omega$ in Assumption \ref{ass-multi-bounded} can be constructed as $\{(\omega(1),\omega(2))|\underline{\omega}\leq\omega(1)\leq \bar{\omega},\ |\omega(2)|\leq \bar{\theta}\}$, where the positive constants $\bar{\theta},\ \underline{\theta},\ \bar{M}$ satisfy $\bar{\theta}<\underline{\theta}\leq \bar{M}$. 
	
\end{remark}

\begin{assumption}\label{ass-noise}
	$\{w(k),\ k\geq 1\}$ is an i.i.d. random sequence with zero mean and finite variance. The probability distribution function $F(\cdot)$ of $w(1)$ is assumed to be known. 
\end{assumption}

Then, based on the above system \eqref{multi-system}-\eqref{q-information}, the tracking control problem considered in this paper is detailed as follows:
\begin{problem}\label{multi-adaptive-control-problem}
	Let $\{y^{*}(k)\}_{k\geq1}$ be a known non-periodic reference signal and satisfy Assumptions \ref{ass-reference-signal} and \ref{ass-multi-reference-signal}. The aim of this paper is to construct an adaptive control $\phi(k)$ based on the quantized observations $\{{S}(1),{S}(2),...,{S}(k-1)\}$ and the past inputs $\{u(1),u(2),...,u(k-1)\}$ to minimize the following index:
	\begin{equation}\label{multi-index}
		J(k)=\mathbb{E}\left[(y(k)-y^{*}(k))^2\right],\ \forall k\geq 1.
	\end{equation}
\end{problem}
\section{Adaptive control algorithm}
Firstly,  we consider the design of the controller $u(k)$. To be specific, when the parameter $\theta$ is known, the optimal input $\bar{\phi}(k)\stackrel{\vartriangle}{=}[\bar{u}(k),\bar{u}(k-1)]^{\intercal}$ can be designed as:
\begin{equation}\label{multi-optimal-control-known}
	\bar{u}(k)=\frac{1}{\theta(1)}y^{*}(k)-\frac{\theta(2)}{\theta(1)}\bar{u}(k-1),\ k=1,2,...
\end{equation}
Secondly, when the parameter $\theta$ is unknown, it is first necessary to estimate the unknown parameter and then design the corresponding control input by the certainty equivalence principle. Therefore, for system  \eqref{multi-system}-\eqref{q-information} and index \eqref{multi-index}, inspired by the recursive projection identification algorithm in \cite{Guo-Zhao-1}, the adaptive control input can be designed as the following steps.

\textbf{Step 1.}(Initiation): Let $\hat{\theta}(0)\stackrel{\vartriangle}{=}[\hat{\theta}(1,0),\hat{\theta}(2,0)]^{\intercal}\in\Omega$, $\phi(1)=[u(1),0]^{\intercal}$, where $u(1)=\frac{y^{*}(1)}{\hat{\theta}(1,0)}$.

\textbf{Step 2.}(Weighted conversion of quantized observations): Based on $S(k)$ in \eqref{q-information}, let $\bar{S}(k)=\sum_{p=0}^m\beta_p {I}_{\{C_p<y(k)\leq C_{p+1}\}}$, where $\beta_{p}$ is a quantized weight with respect to set $(C_p,C_{p+1}]$ for $p=0,1,...,m$, which satisfies $\beta_0>\beta_1>\cdots>\beta_m$ and can be designed. 

\textbf{Step 3.}(Identification): The online stochastic approximation-type identification algorithm is as follows:
\begin{equation}\label{identification}
	\left\{
	\begin{aligned}
		\hat{\theta}(k)=&\Pi_{\Omega}\left(\hat{\theta}(k-1)+\frac{\phi(k)}{k}(A(k)-\bar{S}(k))\right),\\
		A(k)=&\Biggl(\sum_{p=0}^{m}\beta_{p}\biggl[{F}\left(C_{p+1}-\phi^{\intercal}(k)\hat{\theta}(k-1)\right)-{F}\left(C_{p}-\phi^{\intercal}(k)\hat{\theta}(k-1)\right)\biggl]\Biggl),
	\end{aligned}
	\right.
\end{equation}
where $\hat{\theta}(k)\stackrel{\vartriangle}{=}\left[\hat{\theta}(1,k),\hat{\theta}(2,k)\right]^{\intercal}$ is the estimation of $\theta$ at time $k$, $\Pi_{\Omega}(x)=\mbox{argmin}_{\kappa\in\Omega}\|x-\kappa\|,\ \forall x\in\mathbb{R}^{2}$ is the projection operator. 

\textbf{Step 4.}(Design of control input): By the certainty equivalent principle, the optimal input $\phi(k+1)=[u(k+1),u(k)]^{\intercal}$ in the case of unknown parameters is designed as follows:
\begin{equation}\label{multi-adaptive-control}
	u(k+1)=\frac{1}{\hat{\theta}(1,k)}y^{*}(k+1)-\frac{\hat{\theta}(2,k)}{\hat{\theta}(1,k)}u(k),\ k=1,2,...
\end{equation}

\textbf{Step 5.}(Update): Let $k=k+1$, the return to \textbf{Step 2}.
\section{Main results}
In this section, the convergence of the estimation algorithm and the asymptotic validity of the adaptive tracking control are given. Firstly, we give some results about the projection operator and the input $\phi(k)$.

\begin{proposition}[Lemma 2.1 in \cite{Calamai-More} ]\label{pro-projection}
	The projection operator $\Pi_{\Omega}(\cdot)$ satisfies
	\[
	\|\Pi_{\Omega}(x_1)-\Pi_{\Omega}(x_2)\|\leq\|x_1-x_2\|,\quad \forall x_1,\ x_2\in\mathbb{R}^2.
	\]
\end{proposition}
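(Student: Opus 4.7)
The plan is to prove the non-expansiveness of the projection onto the closed convex set $\Omega$ via its first-order variational characterization. Since $\Omega$ is convex and compact, for each $x\in\mathbb{R}^2$ the minimizer $\Pi_\Omega(x)$ exists and is unique, so the map is well-defined. The key fact I would first establish (or recall) is the standard characterization: a point $z\in\Omega$ equals $\Pi_\Omega(x)$ if and only if $\langle x-z,\kappa-z\rangle\leq 0$ for every $\kappa\in\Omega$. This follows by differentiating the squared-distance objective $\|x-\kappa\|^2$ along the convex combination $(1-t)z+t\kappa\in\Omega$ at $t=0^+$ and using optimality of $z$.

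With that inequality in hand, set $z_i=\Pi_\Omega(x_i)$ for $i=1,2$. Taking $\kappa=z_2$ in the characterization at $x_1$ and $\kappa=z_1$ in the characterization at $x_2$ gives
\[
\langle x_1-z_1,\,z_2-z_1\rangle\leq 0,\qquad \langle x_2-z_2,\,z_1-z_2\rangle\leq 0.
\]
Adding these two inequalities cancels the cross terms and produces
\[
\|z_1-z_2\|^2\leq\langle x_1-x_2,\,z_1-z_2\rangle.
\]
An application of the Cauchy--Schwarz inequality to the right-hand side and division by $\|z_1-z_2\|$ (the case $z_1=z_2$ being trivial) then yields the desired bound $\|\Pi_\Omega(x_1)-\Pi_\Omega(x_2)\|\leq\|x_1-x_2\|$.

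There is essentially no obstacle here: the only nontrivial ingredient is the variational inequality, and the rest is two lines of algebra plus Cauchy--Schwarz. Since the result is quoted verbatim from Calamai--Moré, I would in fact present it as a citation and include the two-line verification above only as a self-contained reminder, rather than reproducing the full argument.
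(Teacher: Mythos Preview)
Your argument is correct and is the standard proof of non-expansiveness via the variational characterization of the projection. The paper itself gives no proof at all: it states the proposition as a direct citation of Lemma~2.1 in Calamai--Mor\'e and moves on, which is exactly what you suggest doing in your final paragraph.
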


\begin{proposition}\label{pro-1}
	For the input $\phi(k)$ designed by \eqref{multi-adaptive-control}, under Assumptions \ref{ass-reference-signal}-\ref{ass-multi-bounded}, we have the following two assertions.\\
	i)The uniformly bounded condition holds, i.e., there exists a positive constant $M$ such that
	\[
	\sup_{k\geq 1}\|\phi(k)\|\leq M.
	\]
	ii)The following persistent excitation condition holds.
	\[
	\lambda_{\min}\left(\sum_{i=k+2}^{k+h+1}{\phi}(i){\phi}^{\intercal}(i)\right)>\delta,\ \mbox{for}\ k\geq K_0,
	\]
	where $\delta\stackrel{\vartriangle}{=}\frac{\delta_y}{4(h-1)\bar{M}^2}$, $ \delta_y$, $h$ and $\bar{M}$ are defined in Assumptions \ref{ass-multi-reference-signal} and \ref{ass-multi-bounded}, respectively.
\end{proposition}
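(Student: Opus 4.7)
For \textbf{part (i)}, the plan is to exploit a contraction in the control recursion. The projection step in~\eqref{identification} forces $\hat\theta(k)\in\Omega$ for every $k\ge 0$, so by Assumption~\ref{ass-multi-bounded} we have $|\hat\theta(1,k)|\ge\underline\theta$ and $|\hat\theta(2,k)|\le\bar\theta$, which makes the feedback gain in~\eqref{multi-adaptive-control} a contraction of strength $r:=\bar\theta/\underline\theta\in(0,1)$. Taking absolute values in~\eqref{multi-adaptive-control} and using Assumption~\ref{ass-reference-signal} gives
\[
|u(k+1)|\le\frac{\bar y^*}{\underline\theta}+r\,|u(k)|,\qquad k\ge 1,
\]
which, starting from $u(s)=0$ for $s\le 0$ and the finite value $u(1)=y^*(1)/\hat\theta(1,0)$, yields by induction the uniform bound $|u(k)|\le\max\bigl\{|u(1)|,\,\bar y^*/[\underline\theta(1-r)]\bigr\}$. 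Setting $M:=\sqrt{2}\sup_{k}|u(k)|$ completes part~(i).

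For \textbf{part (ii)}, the starting point is the exact identity produced by multiplying~\eqref{multi-adaptive-control} through by $\hat\theta(1,k)$, namely
\[
y^*(i)=\hat\theta^{\intercal}(i-1)\,\phi(i),\qquad i\ge 1.
\]
I would shift the index $k\mapsto k+2$ in Assumption~\ref{ass-multi-reference-signal} so that the $\mathcal Y$-sum aligns with the $\phi$-window of the proposition, and then transfer the PE of $\mathcal Y$ to that of $\phi$ using a combination of Cauchy--Schwarz and a perturbation estimate on consecutive parameter estimates. The key algebraic observation exploits the overlap $\phi(i)_{2}=\phi(i-1)_{1}=u(i-1)$: for a unit vector $v\in\mathbb{R}^{2}$, defining $\xi_{i}:=(v^{\intercal}\phi(i),\,v^{\intercal}\phi(i-1))^{\intercal}\in\mathbb{R}^{2}$ and substituting the identity above gives
\[
v^{\intercal}\mathcal Y(i)=\hat\theta^{\intercal}(i-1)\,\xi_{i}+v_{2}\bigl(\hat\theta(i-2)-\hat\theta(i-1)\bigr)^{\!\intercal}\phi(i-1),
\]
so that the leading term already sees $\phi$ only through the directional projections $v^{\intercal}\phi(i)$ and $v^{\intercal}\phi(i-1)$ that we must lower-bound.

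Cauchy--Schwarz on the leading term yields $(\hat\theta^{\intercal}(i-1)\xi_{i})^{2}\le\bar M^{2}\|\xi_{i}\|^{2}=\bar M^{2}[(v^{\intercal}\phi(i))^{2}+(v^{\intercal}\phi(i-1))^{2}]$. Applying $(a+b)^{2}\le 2a^{2}+2b^{2}$, summing over $i=k+3,\dots,k+h+1$, and noting that each $\phi(j)$ (for $j$ in the $\phi$-window) is counted at most twice, the PE hypothesis $\sum_{i}(v^{\intercal}\mathcal Y(i))^{2}>\delta_{y}$ from the shifted version of Assumption~\ref{ass-multi-reference-signal} produces
\[
\delta_{y}\le 4\bar M^{2}\;v^{\intercal}\!\!\!\sum_{i=k+2}^{k+h+1}\!\!\phi(i)\phi^{\intercal}(i)\,v\;+\;2M^{2}\!\!\!\sum_{i=k+3}^{k+h+1}\!\!\|\hat\theta(i-1)-\hat\theta(i-2)\|^{2}.
\]
The perturbation tail is controlled by the identification step~\eqref{identification} together with Proposition~\ref{pro-projection}: the non-expansiveness of $\Pi_{\Omega}$ and the uniform boundedness of $A(k)$, $\bar S(k)$, and $\phi(k)$ imply $\|\hat\theta(i-1)-\hat\theta(i-2)\|=O(1/i)$, so the tail is $O((h-1)/k^{2})$ and can be absorbed by, say, $\delta_{y}/2$ once $k\ge K_{0}$ is sufficiently large.

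The main obstacle I anticipate is the bookkeeping of the perturbation term and the precise choice of $K_{0}$ that yields the stated constant $\delta_{y}/[4(h-1)\bar M^{2}]$: the factor $(h-1)$ in the denominator should arise either from spreading the PE contribution evenly over the $(h-1)$ indices of the shifted $\mathcal Y$-window or from using a slightly weaker Cauchy--Schwarz step that keeps the perturbation argument self-contained. In either case, the balance of the two contributions on the right-hand side of the displayed inequality delivers the quadratic-form bound, and hence the minimum-eigenvalue bound, claimed in the proposition.
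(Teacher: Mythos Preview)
Your argument for part~(i) is correct and is the same geometric-contraction idea as the paper's; the paper unfolds the recursion into an explicit sum and bounds it by $\bar y^{*}/(\underline\theta-\bar\theta)$, which is exactly the fixed point of your induction.

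For part~(ii) your approach is correct but genuinely different from the paper's, and in fact sharper. The paper does \emph{not} compare $\mathcal Y$ and $\phi$ directly. It first introduces an intermediate vector $\psi(k)=A_k(z)\phi(k)$ with $A_k(z)=1+\frac{\hat\theta(2,k-1)}{\hat\theta(1,k-1)}z$, proves the coarse bound
\[
2(h-1)\,\lambda_{\min}\Bigl(\sum_{i=k+2}^{k+h+1}\phi(i)\phi^{\intercal}(i)\Bigr)\ \ge\ \lambda_{\min}\Bigl(\sum_{i=k+3}^{k+h+1}\psi(i)\psi^{\intercal}(i)\Bigr),
\]
and then relates $x^{\intercal}\psi(i)$ to $g_i(x)^{\intercal}\mathcal Y(i)$ through two time-varying backward-shift polynomials $H_{x,i}(z)$ and $\Delta_{x,i}(z)$, freezing $g_i$ at $g_{k+2}$ and controlling the resulting drift by the same $O(1/k)$ estimate on $\|\hat\theta(i)-\hat\theta(j)\|$ that you use. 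The factor $(h-1)$ that puzzles you enters only in the first displayed inequality above, via the crude Cauchy--Schwarz step $(x^{\intercal}\psi(i))^{2}\le 2\max\{1,\bar\theta^{2}/\underline\theta^{2}\}\sum_{j=1}^{2}(x^{\intercal}\phi(i-j+1))^{2}$ followed by summing over a window of length $h-1$. Your identity $v^{\intercal}\mathcal Y(i)=\hat\theta^{\intercal}(i-1)\xi_i+v_2(\hat\theta(i-2)-\hat\theta(i-1))^{\intercal}\phi(i-1)$ bypasses $\psi$ altogether, and the ``each $\phi(j)$ counted at most twice'' bookkeeping replaces the $(h-1)$ multiplicity, so you end up with $\lambda_{\min}\ge\delta_y/(8\bar M^{2})$ (or any constant below $\delta_y/(4\bar M^{2})$ by shrinking the tail further). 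Since $h>2$ this is at least as large as the paper's $\delta_y/(4(h-1)\bar M^{2})$, so the stated proposition follows a fortiori; you do not need to manufacture the $(h-1)$ factor at all.
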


\begin{proof}
	i) By \eqref{multi-adaptive-control}, the control input $u(k)$ can be expressed as
	\[
	u(k)=\frac{y^{*}(k)}{\hat{\theta}(1,k-1)}+\sum_{i=1}^{k-1}(-1)^{k-i}\frac{\prod_{j=i}^{k-1}\hat{\theta}(2,j)}{\prod_{j=i-1}^{k-1}\hat{\theta}(1,j)}y^{*}(i).
	\]
	Hence, for $\forall k\geq 1$, we have
	\begin{equation*}\label{u-bounded}
		|u(k)|\leq \frac{\bar{y}^{*}}{\underline{\theta}}+\sum_{i=1}^{k-1}\frac{\bar{\theta}^{k-i}}{\underline{\theta}^{k-i+1}}\bar{y}^{*}\leq \frac{\bar{y}^{*}}{\underline{\theta}-\bar{\theta}}.
	\end{equation*}
	Denote $M=\frac{\sqrt{2}\bar{y}^{*}}{\underline{\theta}-\bar{\theta}}$. It results in $\sup_{k\geq 1}\|\phi(k)\|\leq M$.
	
	ii) Denote $A_k(z)=1+\frac{\hat{\theta}(2,k-1)}{\hat{\theta}(1,k-1)}z$ and $B_k(z)=\frac{1}{\hat{\theta}(1,k-1)}$, where $z$ is the shift-back operator. Therefore, the control input \eqref{multi-adaptive-control} can be written as
	\begin{equation}\label{z-back-u}
		A_{k}(z)u(k)=B_k(z)y^{*}(k).
	\end{equation}
	Let $\psi(k)=A_k(z)\phi(k)=[B_k(z)y^{*}(k),B_{k-1}(z)y^{*}(k-1)+\Delta_{k-2}^{k-1}u(k-2)]^{\intercal}$, where
	\vspace{-5pt}
	\[
	\Delta_{k-2}^{k-1}\stackrel{\vartriangle}{=}\frac{\hat{\theta}(2,k-1)}{\hat{\theta}(1,k-1)}-\frac{\hat{\theta}(2,k-2)}{\hat{\theta}(1,k-2)}.
	\]
	\vspace{-5pt}
	On the one hand, for any $ x\in\mathbb{R}^2$, we obtain
	\begin{equation*}
		\begin{aligned}
			x^{\intercal}\left[\sum_{i=k+3}^{k+h+1}\psi(i)\psi^{\intercal}(i)\right]x
			=&\sum_{i=k+3}^{k+h+1}\left[x^{\intercal}\left(\phi(i)+\frac{\hat{\theta}(2,i-1)}{\hat{\theta}(1,i-1)}\phi(i-1)\right)\right]^2\\
			\leq&2\max\Bigl\{1,\frac{\bar{\theta}^2}{\underline{\theta}^2}\Bigl\}\left[\sum_{i=k+3}^{k+h+1}\sum_{j=1}^{2}\left[x^{\intercal}\phi(i-j+1)\right]^2\right]\\
			\leq& 2(h-1)\left[\sum_{i=k+2}^{k+h+1}\left[x^{\intercal}\phi(i)\phi^{\intercal}(i)x\right]\right].
		\end{aligned}	
	\end{equation*}
	\vspace{-5pt}
	Hence,
	\begin{equation}\label{PE-ineq-1}
		2(h-1)\lambda_{\min}\left(\sum_{i=k+2}^{k+h+1}\phi(i)\phi^{\intercal}(i)\right)\geq \lambda_{\min}\left(\sum_{i=k+3}^{k+h+1}\psi(i)\psi^{\intercal}(i)\right).
	\end{equation}
	On the other hand, we just need to prove the following inequality holds.
	\[
	\begin{aligned}
		\lambda_{\min}\left(\sum_{i=k+3}^{k+h+1}\psi(i)\psi^{\intercal}(i)\right)\geq K\lambda_{\min}\left(\sum_{i=k+3}^{k+h+1}\mathcal{Y}(i)\mathcal{Y}^{\intercal}(i)\right),\ \mbox{for\ some}\ K>0.
	\end{aligned}
	\]
	Define the following two backward-shifted polynomials with time-varying parameters
	\vspace{-5pt}
	\begin{equation}\label{H}
		\begin{aligned}
			H_{x,k}(z)=x(1)B_{k}(z)+x(2)B_{k-1}(z)z{=}\sum_{l=0}^{1}g_{l,k}(x)z^{l},\ \Delta_{x,k}(z)=x(2)\Delta_{k-2}^{k-1}z^2,
		\end{aligned}
	\end{equation}
	where $x=[x(1),x(2)]^{\intercal}$ and $g_k(x){=}[g_{0,k}(x),g_{1,k}(x)]^{\intercal}{=}\left[\frac{x(1)}{\hat{\theta}(1,k-1)},\frac{x(2)}{\hat{\theta}(1,k-2)}\right]^{\intercal}.$
	
	Firstly, by \eqref{H}, we have $x^{\intercal}\psi(i)=H_{x,i}(z)y^{*}(i)+\Delta_{x,i}(z)u(i)$. Then, we have
	\begin{equation}\label{ineq-1}
		\begin{aligned}
			\sum_{i=k+3}^{k+h+1}(x^{\intercal}\psi(i))^2
			=&\sum_{i=k+3}^{k+h+1}\biggl[H_{x,i}(z)y^{*}(i)+x(2)\Delta_{i-2}^{i-1}u(i-2)\biggl]^2\\
			\geq&2\Biggl[\sum_{i=k+3}^{k+h+1}H_{x,i}(z)y^{*}(i)x(2)\Delta_{i-2}^{i-1}u(i-2)  \Biggl]+\sum_{i=k+3}^{k+h+1}\biggl[\sum_{p=0}^{1}g_{p,i}(x)y^{*}(i-p)\biggl]^2\\
			=&\sum_{i=k+3}^{k+h+1}\biggl[g^{\intercal}_{i}(x)\mathcal{Y}(i)\biggl]^2+2\Biggl[\sum_{i=k+3}^{k+h+1}H_{x,i}(z)y^{*}(i)x(2)\Delta_{i-2}^{i-1}u(i-2)  \Biggl]\\
			\geq&2\Biggl[\sum_{i=k+3}^{k+h+1}g^{\intercal}_{k+2}(x)\mathcal{Y}(i)\mathcal{Y}^{\intercal}(i)(g_{i}(x)-g_{k+2}(x)) \Biggl]+2\Biggl[\sum_{i=k+3}^{k+h+1}H_{x,i}(z)y^{*}(i)x(2)\Delta_{i-2}^{i-1}u(i-2)  \Biggl]\\
			&+g^{\intercal}_{k+2}(x)\sum_{i=k+3}^{k+h+1}\biggl[\mathcal{Y}(i)\mathcal{Y}^{\intercal}(i)\biggl]g_{k+2}(x).
		\end{aligned}
	\end{equation}
We just need to deal with the last two terms of \eqref{ineq-1} separately. Since
	\[
	\|g_{k+2}(x)\|\leq\sqrt{\frac{(x(1))^2}{\underline{\theta}^2}+\frac{(x(2))^2}{\underline{\theta}^2}}=\frac{\|x\|}{\underline{\theta}}, \|\mathcal{Y}(i)\|\leq\sqrt{2}\bar{y}^{*},
	\]
	then, denote $K_1=\frac{2(\bar{y}^{*})^2\|x\|}{\underline{\theta}}$, the following relation holds.
	\begin{equation}\label{ineq-1.1}
		\begin{aligned}
			&\sum_{i=k+3}^{k+h+1}\Bigl|g^{\intercal}_{k+2}(x)\mathcal{Y}(i)\mathcal{Y}^{\intercal}(i)(g_{i}(x)-g_{k+2}(x))\Bigl|\\	
			\leq& \Biggl[\sum_{i=k+3}^{k+h+1}\|g^{\intercal}_{k+2}(x)\|\times\|\mathcal{Y}(i)\|\times\|\mathcal{Y}^{\intercal}(i)\|\times\|(g_{i}(x)-g_{k+2}(x))\| \Biggl]\\
			\leq&K_1	\Biggl[\sum_{i=k+3}^{k+h+1}\sqrt{\Biggl(\frac{x(1)}{\hat{\theta}(1,i-1)}-\frac{x(1)}{\hat{\theta}(1,k+1)}\Biggl)^2+\Biggl(\frac{x(2)}{\hat{\theta}(1,i-2)}-\frac{x(2)}{\hat{\theta}(1,k)}\Biggl)^2} \Biggl]\\
			=&K_1	\Biggl[\sum_{i=k+3}^{k+h+1}\sqrt{\Biggl(\frac{x(1)(\hat{\theta}(1,k+1)-\hat{\theta}(1,i-1))}{\hat{\theta}(1,i-1)\hat{\theta}(1,k+1)}\Biggl)^2+\Biggl(\frac{x(2)(\hat{\theta}(1,k)-\hat{\theta}(1,i-2))}{\hat{\theta}(1,i-2)\hat{\theta}(1,k)}\Biggl)^2} \Biggl]\\
			\leq&\frac{K_1\sqrt{\max_{i=1,2}\{x^2(i)\}}}{\underline{\theta}^2}\Biggl[\sum_{i=k+3}^{k+h+1}\sqrt{\sum_{j=1}^2\bigl\|\hat{\theta}(k+2-j)-\hat{\theta}(i-j)\bigl\|^2}\Biggl].
		\end{aligned}
	\end{equation}
	Denote $K_2=\frac{2(\bar{y}^{*})^2\|x\|\sqrt{\max_{i=1,2}\{x^2(i)\}}}{\underline{\theta}^3}$. Since $||\hat{\theta}(i)-\hat{\theta}(j)||^2\leq
	\frac{M^2(i-j)^2}{(j+1)^2}\bigl(m\sum_{p=0}^{m}\beta^2_p\bigl), j\leq i$, then we have
	\begin{equation}\label{ineq-1.2}
		\begin{aligned}
			\sum_{i=k+3}^{k+h+1}\Bigl|g^{\intercal}_{k+2}(x)\mathcal{Y}(i)\mathcal{Y}^{\intercal}(i)(g_{i}(x)-g_{k+2}(x))\Bigl|\leq& K_2\Biggl[\sum_{i=k+3}^{k+h+1}\sqrt{\sum_{j=1}^2\frac{M^2(i-k-2)^2}{(k+3-j)^2}m\sum_{p=0}^{m}\beta^2_p}\Biggl]\\\leq&\frac{K_2M(h-1)^{\frac{3}{2}}\sqrt{2m\sum_{p=0}^{m}\beta^2_p}}{k+2}	=O\Bigl(\frac{1}{k}\Bigl).
		\end{aligned}
	\end{equation}
	Next, we estimate the last term of \eqref{ineq-1}. 
	\begin{equation}\label{ineq-1.3}
		\begin{aligned}
			&	\sum_{i=k+3}^{k+h+1}\biggl| H_{x,i}(z)y^{*}(i)x(2)\Delta_{i-2}^{i-1}u(i-2)\biggl|\\
			\leq&\frac{M\bar{y}^{*}(|x(1)+|x(2)|)|x(2)|}{\underline{\theta}}	\Biggl[\sum_{i=k+3}^{k+h+1}|\Delta_{i-2}^{i-1}| \Biggl]\\
			=&\frac{M\bar{y}^{*}(|x(1)+|x(2)|)|x(2)|}{\underline{\theta}}\sum_{i=k+3}^{k+h+1}\Biggl[\Biggl|\frac{\hat{\theta}(2,i-1)}{\hat{\theta}(1,i-1)}-\frac{\hat{\theta}(2,i-2)}{\hat{\theta}(1,i-2)} \Biggl|\Biggl]\\
			=&\frac{M\bar{y}^{*}(|x(1)+|x(2)|)|x(2)|}{\underline{\theta}}\sum_{i=k+3}^{k+h+1}\Biggl[\frac{|\hat{\theta}(1,i-2)\hat{\theta}(2,i-1)-\hat{\theta}(1,i-1)\hat{\theta}(2,i-2)|}{|\hat{\theta}(1,i-1)\hat{\theta}(1,i-2)|} \Biggl]\\
			\leq&\frac{M\bar{y}^{*}(|x(1)+|x(2)|)|x(2)|}{\underline{\theta}^3}\sum_{i=k+3}^{k+h+1}\Bigl[|\hat{\theta}(1,i-2)||\hat{\theta}(2,i-1)-\hat{\theta}(2,i-2)|+|\hat{\theta}(2,i-2)||\hat{\theta}(1,i-1)-\hat{\theta}(1,i-2)| \Bigl]\\
			\leq&\frac{\sqrt{2}M\bar{M}\bar{y}^{*}(|x(1)+|x(2)|)|x(2)|}{\underline{\theta}^3}\sum_{i=k+3}^{k+h+1}\Bigl[\|\hat{\theta}(i-1)-\hat{\theta}(i-2)\|\Bigl].
		\end{aligned}
	\end{equation}
	Since $||\hat{\theta}(i)-\hat{\theta}(j)||\leq
	\frac{M(i-j)}{(j+1)}\sum_{p=0}^{m}|\beta_p|,\ j\leq i$, then
	\begin{equation}\label{ineq-1.4}
		\begin{aligned}
			\sum_{i=k+3}^{k+h+1}\biggl| H_{x,i}(z)y^{*}(i)x(2)\Delta_{i-2}^{i-1}u(i-2)\biggl|\leq& \frac{\sqrt{2}M\bar{M}\bar{y}^{*}(|x(1)+|x(2)|)|x(2)|}{\underline{\theta}^3}\sum_{i=k+3}^{k+h+1}\frac{M}{i-1}\sum_{p=0}^{m}|\beta_p|\\
			\leq&\frac{\sqrt{2}(h-1)M^2\bar{M}\bar{y}^{*}(|x(1)+|x(2)|)|x(2)|}{\underline{\theta}^3(k+2)}\sum_{p=0}^{m}|\beta_p|\\
			=&O\Bigl(\frac{1}{k}\Bigl).
		\end{aligned}
	\end{equation}
	Finally, let $x$ be the unit vector. Then $\|g_k(x)\|^2=\sum_{i=1}^2\frac{x^2(i)}{\hat{\theta}^2(1,k-i)}\in\left[\frac{1}{\bar{M}^2}, \frac{1}{\underline{\theta}^2}\right]$. Taking \eqref{ineq-1.1}-\eqref{ineq-1.4} into \eqref{ineq-1}, we obtain the following relations.
	\begin{equation}\label{PE-ineq}
		\begin{aligned}
			x^{\intercal}\left[\sum_{i=k+3}^{k+h+1}\psi(i)\psi^{\intercal}(i)\right]x\geq&g^{\intercal}_{k+2}(x)\sum_{i=k+3}^{k+h+1}\biggl[\mathcal{Y}(i)\mathcal{Y}^{\intercal}(i)\biggl]g_{k+2}(x)-O\Bigl(\frac{1}{k}\Bigl)\\
			\geq&\frac{\lambda_{\min}\left(\sum_{i=k+3}^{k+h+1}\mathcal{Y}(i)\mathcal{Y}^{\intercal}(i)\right)}{\bar{M}^2}-O\left(\frac{1}{k}\right).
		\end{aligned}	
	\end{equation}
	Therefore, there exists $K_0>0$ such that for $ k\geq K_0$,
	\begin{equation}\label{PE-ineq-1.1}
		x^{\intercal}\left[\sum_{i=k+3}^{k+h+1}\psi(i)\psi^{\intercal}(i)\right]x\geq \frac{\lambda_{\min}\left(\sum_{i=k+3}^{k+h+1}\mathcal{Y}(i)\mathcal{Y}^{\intercal}(i)\right)}{2\bar{M}^2}.
	\end{equation}
	Therefore, based on \eqref{PE-ineq-1.1}, we obtain
	\begin{equation}\label{PE-ineq-2} \lambda_{\min}\left(\sum_{i=k+3}^{k+h+1}\psi(i)\psi^{\intercal}(i)\right)\geq\frac{\delta_y}{2\bar{M}^2}, \ \mbox{for}\ k\geq K_0.
	\end{equation}
	Therefore, according to \eqref{PE-ineq-1} and \eqref{PE-ineq-2}, we obtain
	\[
	\lambda_{\min}\left(\sum_{i=k+2}^{k+h+1}\phi(i)\phi^{\intercal}(i)\right)\geq\frac{\delta_y}{4(h-1)\bar{M}^2}, \ \mbox{for}\ k\geq K_0.
	\]
	This completes the proof.
\end{proof}

\begin{lemma}[\cite{Zhao-Wang-Bi-1}]\label{lem-convergence}
	For any given $\zeta\in\mathbb{R}$, $\gamma\in\mathbb{R}$ and $r\in \mathbb{Z}^{+}$, we have the following assertions as $k\rightarrow\infty$:
	\begin{equation}\label{lem-convergence-1}
		\prod_{i=r}^{k}\left(1-\frac{\zeta}{i}\right)=O\left(\frac{1}{k^{\zeta}}\right),\nonumber
	\end{equation}
	and
	\begin{equation}\label{lem-convergence-2}
		\sum_{l=r}^{k-1}\prod_{i=l+1}^{k}\left(1-\frac{\zeta}{i}\right)\frac{1}{l^{\gamma}}=
		\left\{
		\begin{aligned}
			&O\left(\frac{1}{k^{\gamma-1}}\right), &\gamma<\zeta+1,\\
			&O\left(\frac{\ln (k)}{k^{\zeta}}\right),  &\gamma=\zeta+1,\\
			&O\left(\frac{1}{k^{\zeta}}\right), &\gamma>\zeta+1.
		\end{aligned}
		\right.\nonumber
	\end{equation}
\end{lemma}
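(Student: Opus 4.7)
My plan is to treat the two assertions separately, using the first as the main ingredient for the second. For part one, I would take logarithms and use a Taylor expansion. For sufficiently large $i$, one has $\ln(1-\zeta/i) = -\zeta/i + O(1/i^2)$, so
\begin{equation*}
\ln\prod_{i=r}^{k}\left(1-\frac{\zeta}{i}\right) = -\zeta\sum_{i=r}^{k}\frac{1}{i} + \sum_{i=r}^{k}O\!\left(\frac{1}{i^{2}}\right) = -\zeta\ln k + C + o(1),
\end{equation*}
because the harmonic sum is $\ln k - \ln(r-1) + O(1)$ and the error series $\sum 1/i^{2}$ converges. Exponentiating gives the desired $O(1/k^{\zeta})$ bound. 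A minor technicality is that for small $i$ the factor $1-\zeta/i$ may be negative or zero when $\zeta>0$; this is harmless since only the tail behaviour matters for the $O(\cdot)$ statement, and we may assume $r$ (or $k_0$ in the $O$-definition) is taken large enough that all factors are positive.

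For part two, the natural strategy is to compare the partial product with the full product. Using part one applied to $\prod_{i=r}^{l}(1-\zeta/i)$ and $\prod_{i=r}^{k}(1-\zeta/i)$, one obtains
\begin{equation*}
\prod_{i=l+1}^{k}\left(1-\frac{\zeta}{i}\right) = O\!\left(\frac{l^{\zeta}}{k^{\zeta}}\right),
\end{equation*}
so the target sum is dominated by $\frac{C}{k^{\zeta}}\sum_{l=r}^{k-1}l^{\zeta-\gamma}$. The three cases of the lemma then follow from the standard integral test applied to $\sum l^{\zeta-\gamma}$: when $\gamma<\zeta+1$ the sum grows like $k^{\zeta-\gamma+1}$, giving $O(1/k^{\gamma-1})$; when $\gamma=\zeta+1$ the sum is $O(\ln k)$, giving $O(\ln k/k^{\zeta})$; and when $\gamma>\zeta+1$ the sum is bounded, giving $O(1/k^{\zeta})$.

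The main obstacle I anticipate is not any single estimate but making the two-sided bounds in part one uniform enough to be used as an equality-type asymptotic (not just an upper $O$-bound) inside part two. To handle this cleanly, I would prove a matching lower bound $\prod_{i=r}^{k}(1-\zeta/i) \ge c/k^{\zeta}$ for some $c>0$ and all $k$ large, by the same logarithmic expansion (which gives both upper and lower control, since the $O(1/i^{2})$ remainder is absolutely summable). With that two-sided asymptotic in hand, the ratio $\prod_{i=l+1}^{k}(1-\zeta/i)$ is squeezed between constant multiples of $(l/k)^{\zeta}$ uniformly in $l\le k$, and the rest of part two reduces to the elementary integral comparison described above.
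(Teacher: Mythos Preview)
Your proposal is correct and is the standard argument for this type of estimate. Note, however, that the paper does not actually supply a proof of this lemma: it is quoted verbatim from reference \cite{Zhao-Wang-Bi-1} and used as a black box, so there is no ``paper's own proof'' to compare against. Your logarithmic expansion for part one and the subsequent reduction of part two to an integral comparison on $\sum_{l} l^{\zeta-\gamma}$ are exactly how such results are typically established; in particular, your observation that a two-sided bound $c/k^{\zeta}\le \prod_{i=r}^{k}(1-\zeta/i)\le C/k^{\zeta}$ is needed to control the ratio $\prod_{i=l+1}^{k}(1-\zeta/i)$ uniformly in $l$ is the one nontrivial point, and you handle it correctly via the absolute summability of the $O(1/i^{2})$ remainder.
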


In addition, for system noise $\{w(k)\}_{k\geq 1}$, we need the following additional assumption in the framework of this section.
\begin{assumption}\label{ass-multi-noise}
	The probability density function of $w(1)$ denoted by $f(\cdot)$. Besides, there exists a minimum $f^{*}$ of $f(\cdot)$ on interval $[-D_1,D_1]$, where $D_1=\max\{|C_m|,|C_1|\}+M\bar{M}$.	
\end{assumption}

Denote the estimation error $\tilde{\theta}(k){=}\hat{\theta}(k)-\theta$. Then, the following results give the convergence property of the estimation error $\tilde{\theta}(k)$.
\begin{theorem}\label{theorem-convergence-2}
	Under Assumptions \ref{ass-reference-signal}-\ref{ass-multi-noise}, the mean square convergence rate of the estimation error is given as follows:
	\begin{equation*}
		\mathbb{E}\left[\left\|\tilde{\theta}(k)\right\|^2\right]=
		\left\{
		\begin{aligned}
			&O\left(\frac{1}{k}\right),\ &\mbox{if}\ \frac{2(\beta_0-\beta_m)\delta {f}^{*}}{{h}}>1;\\
			&O\left(\frac{\ln (k)}{k}\right),\ &\mbox{if}\ \frac{2(\beta_0-\beta_m)\delta {f}^{*}}{{h}}=1;\\
			&O\left(\frac{1}{k^{\frac{2(\beta_0-\beta_m)\delta {f}^{*}}{{h}}}}\right),\ &\mbox{if}\ \frac{2(\beta_0-\beta_m)\delta {f}^{*}}{{h}}<1,
		\end{aligned}
		\right.
	\end{equation*}
	where $h$, $f^{*}$ and $\delta$ are defined in Assumptions \ref{ass-multi-reference-signal}, \ref{ass-multi-noise} and Proposition \ref{pro-1}, respectively.
\end{theorem}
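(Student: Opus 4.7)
My plan is to derive a one-step drift inequality for $V_k := \mathbb{E}\|\tilde\theta(k)\|^2$ from the identification algorithm, iterate it over a window of $h$ steps so that the persistent excitation condition from Proposition \ref{pro-1}(ii) can be brought to bear, and then appeal to Lemma \ref{lem-convergence} to obtain the trichotomy. For the first step I would exploit the non-expansiveness of $\Pi_\Omega$ (Proposition \ref{pro-projection}) together with $\theta\in\Omega$ to get
\[
\|\tilde\theta(k)\|^2 \le \Bigl\|\tilde\theta(k-1)+\frac{\phi(k)}{k}(A(k)-\bar{S}(k))\Bigr\|^2,
\]
expand the square, and take $\mathbb{E}[\cdot|\mathcal{F}_{k-1}]$. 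Writing $\mu(x)=\sum_{p=0}^m\beta_p[F(C_{p+1}-x)-F(C_p-x)]$ one has $A(k)=\mu(\phi^{\intercal}(k)\hat\theta(k-1))$ and $\mathbb{E}[\bar{S}(k)|\mathcal{F}_{k-1}]=\mu(\phi^{\intercal}(k)\theta)$; telescoping with $C_0=-\infty$ and $C_{m+1}=+\infty$ gives $\mu'(x)=-\sum_{q=1}^m(\beta_{q-1}-\beta_q)f(C_q-x)$, so the mean value theorem, the ordering $\beta_{q-1}>\beta_q$, and Assumption \ref{ass-multi-noise} (with $|C_q-\xi|\le D_1$ because $\hat\theta(k-1),\theta\in\Omega$ and $\|\phi(k)\|\le M$ by Proposition \ref{pro-1}(i)) yield
\[
\mathbb{E}[A(k)-\bar{S}(k)|\mathcal{F}_{k-1}]\cdot\phi^{\intercal}(k)\tilde\theta(k-1) \le -(\beta_0-\beta_m)f^{*}(\phi^{\intercal}(k)\tilde\theta(k-1))^2.
\]
Since $A(k),\bar{S}(k)$ are bounded in absolute value by $\max_p|\beta_p|$ and $\|\phi(k)\|\le M$, the quadratic remainder is $O(1/k^2)$, which after taking total expectation produces the one-step drift
\[
V_k \le V_{k-1} - \frac{2(\beta_0-\beta_m)f^{*}}{k}\mathbb{E}[(\phi^{\intercal}(k)\tilde\theta(k-1))^2] + \frac{c}{k^2}.
\]

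To convert the quadratic form into $V_{k-1}$, I would sum this drift over $h$ consecutive indices and invoke Proposition \ref{pro-1}(ii). Since the PE bound is stated for a fixed iterate while the drift involves $\tilde\theta(k+j-1)$, I would first use the identification recursion with $\|\phi\|\le M$ and $|A-\bar{S}|\le\max_p|\beta_p|$ to show $\|\hat\theta(k+j-1)-\hat\theta(k-1)\|=O(1/k)$ uniformly for $0\le j\le h-1$, then apply the inequality
\[
(\phi^{\intercal}(k+j)\tilde\theta(k+j-1))^2\ge(\phi^{\intercal}(k+j)\tilde\theta(k-1))^2-2|\phi^{\intercal}(k+j)\tilde\theta(k-1)|\,|\phi^{\intercal}(k+j)(\tilde\theta(k+j-1)-\tilde\theta(k-1))|
\]
(chosen deliberately to avoid losing a factor of $2$ in the exponent). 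Combining $\frac{1}{k+j}\ge\frac{1}{k+h-1}$, the PE estimate $\sum_{j=0}^{h-1}\phi(k+j)\phi^{\intercal}(k+j)\ge\delta I$ for $k\ge K_0+2$, Cauchy--Schwarz, and $V_{k-1}^{1/2}\le(1+V_{k-1})/2$ to absorb the cross term, should give the $h$-step recursion
\[
V_{k+h-1}\le\Bigl(1-\frac{2(\beta_0-\beta_m)\delta f^{*}}{k}\Bigr)V_{k-1}+\frac{c'}{k^2}.
\]
Iterating this inequality along the arithmetic progression $K_0+2,\,K_0+2+h,\,\dots$, the homogeneous product behaves like $k^{-\alpha/h}$ with $\alpha=2(\beta_0-\beta_m)\delta f^{*}$ (the factor $1/h$ arising because the multiplicative contraction acts only once per $h$ indices), and the accumulated $c'/k^2$ noise is precisely the sum treated by Lemma \ref{lem-convergence} with $\gamma=2$ and $\zeta=\alpha/h$, giving $O(1/k)$, $O(\ln k/k)$, or $O(k^{-\alpha/h})$ according to whether $\alpha/h>1$, $=1$, or $<1$. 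This is the trichotomy in the statement.

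The principal obstacle I anticipate is the window-averaging step: one must keep the quadratic-form inequality sharp enough that the final exponent is $2(\beta_0-\beta_m)\delta f^{*}/h$ rather than half that value, while simultaneously absorbing the mismatch between $\tilde\theta(k+j-1)$ and $\tilde\theta(k-1)$ into the $O(1/k^2)$ remainder. Using the easier Young-type bound $(a+b)^2\ge a^2/2-b^2$ would cost a factor of $2$ in the exponent; the Cauchy--Schwarz form above preserves sharpness but requires careful tracking of the cross term through a $V_{k-1}^{1/2}$ estimate so that it lands in the $O(1/k^2)$ bucket rather than degrading the leading $-(\alpha/k)V_{k-1}$ drift.
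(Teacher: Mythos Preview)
Your proposal is correct and follows essentially the same route as the paper: projection non-expansiveness, expansion of the square, the telescoping computation of $\mu'$ and mean value theorem to extract the $-2(\beta_0-\beta_m)f^{*}(\phi^{\intercal}\tilde\theta)^2/k$ drift, an $h$-step window to invoke the PE bound from Proposition~\ref{pro-1}(ii), and finally Lemma~\ref{lem-convergence} applied along the arithmetic progression of spacing $h$.

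The one place where you make your own life harder than necessary is the cross term in the window step. You propose controlling $|\phi^{\intercal}(k+j)\tilde\theta(k-1)|$ via $V_{k-1}^{1/2}\le(1+V_{k-1})/2$, but this is superfluous: because $\hat\theta(\cdot)$ is projected onto the compact set $\Omega$ and $\theta\in\Omega$, one has the deterministic bound $\|\tilde\theta(k-1)\|\le 2\bar M$, so the cross term is $O(1/k^2)$ outright (this is exactly what the paper does). With that observation your ``principal obstacle'' disappears and the sharp exponent $2(\beta_0-\beta_m)\delta f^{*}/h$ falls out with no delicate bookkeeping.
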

\begin{proof}
	By \eqref{identification} and Proposition \ref{pro-projection}, we have
	\begin{equation}\label{ineq-4.1}
		\begin{aligned}
			\left\|\tilde{\theta}(k)\right\|^2=&\left\|\Pi_{\Omega}\left(\hat{\theta}(k-1)+\frac{\phi(k)}{k}(A(k)-\bar{S}(k))\right)-\Pi_{\Omega}(\theta)\right\|^2\\
			\leq&\left\|\tilde{\theta}(k-1)+\frac{\phi(k)}{k}(A(k)-\bar{S}(k))\right\|^2\\
			\leq&\biggl\|\tilde{\theta}(k-1)+\frac{\phi(k)}{k}\biggl(\sum_{p=0}^m\beta_p\biggl[{F}\left(C_{p+1}-\phi^{\intercal}(k)\hat{\theta}(k-1)\right)-{F}\left(C_{p}-\phi^{\intercal}(k)\hat{\theta}(k-1)\right)\biggl]-\bar{S}(k)\biggl)\biggl\|^2\\
			=&\left\|\tilde{\theta}(k-1)\right\|^2+\frac{\|\phi(k)\|^2}{k^2}\biggl(\sum_{p=0}^m\beta_p\biggl[{F}\left(C_{p+1}-\phi^{\intercal}(k)\hat{\theta}(k-1)\right)-{F}\left(C_{p}-\phi^{\intercal}(k)\hat{\theta}(k-1)\right)\biggl]-\bar{S}(k)\biggl)^2\\
			&+\frac{2\tilde{\theta}^{\intercal}(k-1)\phi(k)}{k}\Biggl(\sum_{p=0}^m\beta_p\biggl[{F}\left(C_{p+1}-\phi^{\intercal}(k)\hat{\theta}(
			k-1)\right)-{F}\left(C_{p}-\phi^{\intercal}(k)\hat{\theta}(k-1)\right)\biggl]-\bar{S}(k)\Biggl).
		\end{aligned}	
	\end{equation}
	
	Denote $\mathcal{F}_{k-1}=\sigma\{w(s),\ s\leq k-1\}$. By \eqref{identification}-\eqref{multi-adaptive-control}, we obtain $\tilde{\theta}(k-1)$ and $\phi(k)$ are $\mathcal{F}_{k-1}$-measurable. Taking conditional expectation with respect to $\bar{S}(k)$, we obtain
	\begin{equation}\label{ineq-4.4}
		\begin{aligned}
			\mathbb{E}\left[\bar{S}(k)|\mathcal{F}_{k-1}\right]=&\mathbb{E}\left[\sum_{p=0}^m\beta_pI_{\{C_p<y(k)\leq C_{p+1}\}}\biggl|\mathcal{F}_{k-1}\right]\\
			=&\mathbb{E}\left[\sum_{p=0}^m\beta_pI_{\{C_p-\phi^{\intercal}(k)\theta<w(k)\leq C_{p+1}-\phi^{\intercal}(k)\theta\}}\biggl|\mathcal{F}_{k-1}\right]\\
			=&\sum_{p=0}^m\beta_p\left[{F}\left(C_{p+1}-\phi^{\intercal}(k){\theta}\right)-{F}\left(C_{p}-\phi^{\intercal}(k){\theta}\right)\right].
		\end{aligned}
	\end{equation}	
	Therefore, 	based on \eqref{ineq-4.4}, we have
	\begin{equation}\label{ineq-4.5}
		\begin{aligned}
			&\mathbb{E}\Biggl[\frac{2\tilde{\theta}^{\intercal}(k-1)\phi(k)}{k}\biggl(\sum_{p=0}^m\beta_p\biggl[{F}\left(C_{p+1}-\phi^{\intercal}(k)\hat{\theta}(k-1)\right)-{F}\left(C_{p}-\phi^{\intercal}(k)\hat{\theta}(k-1)\right)\biggl]-\bar{S}(k)\biggl)\biggl|\mathcal{F}_{k-1}\Biggl]\\
			=&\frac{2\tilde{\theta}^{\intercal}(k-1)\phi(k)}{k}\Biggl[\sum_{p=0}^{m-1}\beta_p\biggl[{F}\left(C_{p+1}-\phi^{\intercal}(k)\hat{\theta}(k-1)\right)-{F}\left(C_{p+1}-\phi^{\intercal}(k){\theta}\right)\biggl]-\sum_{p=1}^m\beta_p\biggl[{F}\left(C_{p}-\phi^{\intercal}(k)\hat{\theta}(k-1)\right)\\
			&-{F}\left(C_{p}-\phi^{\intercal}(k){\theta}\right)\biggl]\Biggl]\\
			=&\frac{2\tilde{\theta}^{\intercal}(k-1)\phi(k)}{k}\Biggl[\sum_{p=1}^{m}\beta_{p-1}\biggl[{F}\left(C_{p}-\phi^{\intercal}(k)\hat{\theta}(k-1)\right)-{F}\left(C_{p}-\phi^{\intercal}(k){\theta}\right)\biggl]-\sum_{p=1}^m\beta_p\biggl[{F}\left(C_{p}-\phi^{\intercal}(k)\hat{\theta}(k-1)\right)\\
			&-{F}\left(C_{p}-\phi^{\intercal}(k){\theta}\right)\biggl]\Biggl]\\
			=&\frac{2\tilde{\theta}^{\intercal}(k-1)\phi(k)}{k}\sum_{p=1}^m(\beta_{p-1}-\beta_p)\biggl[{F}\left(C_{p}-\phi^{\intercal}(k)\hat{\theta}(k-1)\right)-{F}\left(C_{p}-\phi^{\intercal}(k){\theta}\right)\biggl].
		\end{aligned}
	\end{equation}
	According to the mean value theorem, there exists $\eta_p(k)\in(C_{p}-\phi^{\intercal}(k)\hat{\theta}(k-1),C_{p}-\phi^{\intercal}(k){\theta})$ or $(C_{p}-\phi^{\intercal}(k){\theta},C_{p}-\phi^{\intercal}(k)\hat{\theta}(k-1))$ such that
	\begin{equation}\label{mean-value-4}
		\begin{aligned}
			{F}\left(C_{p}-\phi^{\intercal}(k)\hat{\theta}(k-1)\right)-{F}(C_{p}-\phi^{\intercal}(k){\theta})=-{f(\eta_p(k))}\phi^{\intercal}(k)\tilde{\theta}(k-1).
		\end{aligned}
	\end{equation}
	Thus, based on \eqref{mean-value-4} and Assumption \ref{ass-multi-noise}, \eqref{ineq-4.5} can be rewritten as
	\begin{equation}\label{ineq-4.6}
		\begin{aligned}
			&\mathbb{E}\Biggl[\frac{2\tilde{\theta}^{\intercal}(k-1)\phi(k)}{k}\biggl(\sum_{p=0}^m\beta_p\biggl[{F}\left(C_{p+1}-\phi^{\intercal}(k)\hat{\theta}(k-1)\right)-{F}\left(C_{p}-\phi^{\intercal}(k)\hat{\theta}(k-1)\right)\biggl]-\bar{S}(k)\biggl)\biggl|\mathcal{F}_{k-1}\Biggl]\\
			=&\frac{2\tilde{\theta}^{\intercal}(k-1)\phi(k)}{k}\sum_{p=1}^m(\beta_{p-1}-\beta_p)[-{f(\eta_p(k))}\phi^{\intercal}(k)\tilde{\theta}(k-1)]\\
			\leq&-\frac{2(\beta_{0}-\beta_m)f^{*}\tilde{\theta}^{\intercal}(k-1)\phi(k)\phi^{\intercal}(k)\tilde{\theta}(k-1)}{k}.
		\end{aligned}
	\end{equation}
	By \eqref{ineq-4.6}, \eqref{ineq-4.1} can be written as
	\begin{equation}\label{k-1-ineq}
		\begin{aligned}
			\mathbb{E}\left[\left\|\tilde{\theta}(k)\right\|^2\right]\leq \mathbb{E}\left[\left\|\tilde{\theta}(k-1)\right\|^2\right]+\frac{M^2\left(m\sum_{p=0}^m\beta^2_{p}\right)}{k^2}-\frac{2(\beta_0-\beta_m) f^{*}}{k}\mathbb{E}\left[\tilde{\theta}^{\intercal}(k-1)\phi(k)\phi^{\intercal}(k)\tilde{\theta}(k-1)\right].	
		\end{aligned}	
	\end{equation}
	Consequently, we expand $\left\|\tilde{\theta}(k)\right\|^2$ to $\Bigl\|\tilde{\theta}(k-{h})\Bigl\|^2$ by \eqref{k-1-ineq} and obtain
	\begin{equation}\label{k-h-ineq}
		\begin{aligned}
			\mathbb{E}\left[\left\|\tilde{\theta}(k)\right\|^2\right]\leq \mathbb{E}\left[\left\|\tilde{\theta}(k-{h})\right\|^2\right]+\sum_{i=k-{h}+1}^k\frac{M^2\left(m\sum_{p=0}^m\beta^2_{p}\right)}{i^2}-\sum_{i=k-{h}+1}^k\frac{2(\beta_0-\beta_m) f^{*}}{i}\mathbb{E}\left[\tilde{\theta}^{\intercal}(i-1)\phi(i)\phi^{\intercal}(i)\tilde{\theta}(i-1)\right].
		\end{aligned}
	\end{equation}
	Since for $i=k-h+1,...,k$, we have
	\begin{equation*}
		\begin{aligned}
			&\tilde{\theta}^{\intercal}(i-1)\phi(i)\phi^{\intercal}(i)\tilde{\theta}(i-1)\\
			=&\left(\tilde{\theta}(i-1)-\tilde{\theta}(k-{h})+\tilde{\theta}(k-{h})\right)^{\intercal}\phi(i)\phi^{\intercal}(i)\left(\tilde{\theta}(i-1)-\tilde{\theta}(k-{h})+\tilde{\theta}(k-{h})\right)\\
			=&\tilde{\theta}(k-{h})^{\intercal}\phi(i)\phi^{\intercal}(i)\tilde{\theta}(k-{h})+\left(\tilde{\theta}(i-1)-\tilde{\theta}(k-{h})\right)^{\intercal}\phi(i)\phi^{\intercal}(i)\left(\tilde{\theta}(i-1)-\tilde{\theta}(k-{h})\right)\\
			&+2\left(\tilde{\theta}(i-1)-\tilde{\theta}(k-{h})\right)^{\intercal}\phi(i)\phi^{\intercal}(i)\tilde{\theta}(k-{h}),
		\end{aligned}
	\end{equation*}
	and Proposition \ref{pro-1}, then we have
	\begin{equation}\label{k-h-ineq-1.1}
		\begin{aligned}
			&-\sum_{i=k-{h}+1}^{k}\frac{1}{i}\mathbb{E}\left[\tilde{\theta}^{\intercal}(i-1)\phi(i)\phi^{\intercal}(i)\tilde{\theta}(i-1)\right]\\
			=&-\sum_{i=k-{h}+1}^{k}\frac{1}{i}\mathbb{E}\left[\tilde{\theta}(k-{h})^{\intercal}\phi(i)\phi^{\intercal}(i)\tilde{\theta}(k-{h})\right]-\sum_{i=k-{h}+1}^{k}\frac{2}{i}\mathbb{E}\left[\left(\tilde{\theta}(i-1)-\tilde{\theta}(k-{h})\right)^{\intercal}\phi(i)\phi^{\intercal}(i)\tilde{\theta}(k-{h})\right]\\
			&-\sum_{i=k-{h}+1}^{k}\frac{1}{i}\mathbb{E}\Bigl[\left(\tilde{\theta}(i-1)-\tilde{\theta}(k-{h})\right)^{\intercal}\phi(i)\phi^{\intercal}(i)\left(\tilde{\theta}(i-1)-\tilde{\theta}(k-{h})\right)\Bigl]\\
			\leq&-\sum_{i=k-{h}+1}^{k}\frac{1}{i}\mathbb{E}\left[\tilde{\theta}(k-{h})^{\intercal}\phi(i)\phi^{\intercal}(i)\tilde{\theta}(k-{h})\right]-\sum_{i=k-{h}+1}^{k}\frac{2}{i}\mathbb{E}\left[\left(\tilde{\theta}(i-1)-\tilde{\theta}(k-{h})\right)^{\intercal}\phi(i)\phi^{\intercal}(i)\tilde{\theta}(k-{h})\right]\\
			\leq& \frac{2\bar{M}M^2}{k-{h}+1}\sum_{i=k-{h}+1}^{k}\sqrt{\mathbb{E}\left[\left\|\tilde{\theta}(i-1)-\tilde{\theta}(k-{h})\right\|^2\right]}-\frac{\delta}{k}\mathbb{E}\left[\left\|\tilde{\theta}(k-{h})\right\|^2\right]\\
			\leq&-\frac{\delta}{k}\mathbb{E}\left[\left\|\tilde{\theta}(k-{h})\right\|^2\right]+\frac{2\bar{M}({h}-1){h} M^3}{(k-{h})^2}\sqrt{m\sum_{p=0}^{m}\beta^2_{p}}.	
		\end{aligned}
	\end{equation}
	Therefore, by \eqref{k-h-ineq-1.1}, \eqref{k-h-ineq} can be expressed as
	\begin{equation}\label{ex-re-inequality-3}
		\begin{aligned}
			\mathbb{E}\left[\left\|\tilde{\theta}(k)\right\|^2\right]\leq& \left(1-\frac{2\delta(\beta_0-\beta_m) f^{*}}{k}\right)\mathbb{E}\left[\left\|\tilde{\theta}(k-{h})\right\|^2\right]+\frac{{h}M^2}{(k-{h}+1)^2}{\left(m\sum_{p=0}^m\beta^2_{p}\right)}\\
			&+{\frac{4\bar{M}({h}-1){h}M^3(\beta_0-\beta_m)f^{*} M^3}{(k-{h})^2}\sqrt{m\sum_{p=0}^{m}\beta^2_{p}}}\\
			\leq& \left(1-\frac{2\delta(\beta_0-\beta_m) f^{*}}{k}\right)\mathbb{E}\left[\left\|\tilde{\theta}(k-{h})\right\|^2\right]+O\Biggl(\frac{1}{(k-{h})^{2}}\Biggl).
		\end{aligned}
	\end{equation}
	According to the inequality \eqref{ex-re-inequality-3}, expanding $\left\|\tilde{\theta}(k)\right\|^2$ to $\Bigl\|\tilde{\theta}(k-\lfloor \frac{k}{{h}}\rfloor {h})\Bigl\|^2$ yields
	
	\begin{equation}\label{ex-re-inequality-4}
		\begin{aligned}
			\mathbb{E}\left[\left\|\tilde{\theta}(k)\right\|^2\right]\leq& \prod^{\lfloor \frac{k}{{h}}\rfloor-1}_{j=0}\left(1-\frac{2\delta(\beta_0-\beta_m) f^{*}}{k-j{h}}\right)\mathbb{E}\left[\left\|\tilde{\theta}\left(k-\left\lfloor \frac{k}{{h}}\right\rfloor {h}\right)\right\|^2\right]\\
			&+\sum_{j=2}^{\lfloor \frac{k}{{h}}\rfloor}\prod_{i=0}^{j-2}\left(1-\frac{2\delta(\beta_0-\beta_m) f^{*}}{k-i{h}}\right)O\left(\frac{1}{(k-j{h}+1)^{2}}\right)+O\left(\frac{1}{(k-{h})^{2}}\right).
		\end{aligned}
	\end{equation}
	Firstly, the first item of the right side of \eqref{ex-re-inequality-4} is
	\[
	\begin{aligned}
		\prod^{\lfloor \frac{k}{{h}}\rfloor-1}_{j=0}\left(1-\frac{2\delta(\beta_0-\beta_m) f^{*}}{k-j{h}}\right)\leq \prod^{\lceil \frac{k}{{h}}\rceil}_{m=\lceil \frac{k}{{h}}\rceil-\lfloor \frac{k}{{h}}\rfloor+1}\left(1-\frac{2\delta(\beta_0-\beta_m) f^{*}}{m{h}}\right).
	\end{aligned}
	\]
Secondly, the second item of the right side of \eqref{ex-re-inequality-4} is
	\[	
	\begin{aligned}
		&\sum_{j=2}^{\lfloor\frac{k}{{h}}\rfloor}\prod_{i=0}^{j-2}\left(1-\frac{2\delta(\beta_0-\beta_m) f^{*}}{k-i{h}}\right)O\left(\frac{1}{(k-j{h}+1)^{2}}\right)\nonumber\\
		\leq& \sum_{j=2}^{\lfloor \frac{k}{{h}}\rfloor}\prod_{i=0}^{j-2}\left(1-\frac{2\delta(\beta_0-\beta_m) f^{*}}{k-i{h}}\right)O\left(\frac{1}{(\lfloor \frac{k}{{h}}\rfloor-j+\frac{1}{{h}})^{2}h^{2}}\right)\nonumber\\
		\leq& \sum_{m=1}^{\lfloor \frac{k}{{h}}\rfloor-2}\prod^{\lceil \frac{k}{{h}}\rceil}_{p=\lceil \frac{k}{{h}}\rceil-\lfloor \frac{k}{{h}}\rfloor+m+2}\left(1-\frac{2\delta(\beta_0-\beta_m) f^{*}}{p{h}}\right)O\left(\frac{1}{m^{2}{h}^{2}}\right)\nonumber+O\left(\prod_{i=0}^{\lfloor \frac{k}{{h}}\rfloor-2}\left(1-\frac{2\delta(\beta_0-\beta_m)f^{*}}{k-i{h}}\right)\right).\nonumber
	\end{aligned}
	\]
	Thus, we have
		\begin{equation}\label{ex-re-inequality-5}
			\begin{aligned}
				\mathbb{E}\left[\left\|\tilde{\theta}(k)\right\|^2\right]\leq&\prod^{\lceil \frac{k}{{h}}\rceil}_{m=\lceil \frac{k}{{h}}\rceil-\lfloor \frac{k}{{h}}\rfloor+1}\left(1-\frac{2\delta(\beta_0-\beta_m) f^{*}}{m{h}}\right)\mathbb{E}\left[\left\|\tilde{\theta}(k-\left\lfloor \frac{k}{{h}}\right\rfloor {h})\right\|^2\right]\\
				&+\sum_{m=1}^{\lfloor \frac{k}{{h}}\rfloor-2}\prod^{\lceil \frac{k}{{h}}\rceil}_{p=\lceil \frac{k}{{h}}\rceil-\lfloor \frac{k}{{h}}\rfloor+m+2}\left(1-\frac{2\delta(\beta_0-\beta_m) f^{*}}{p{h}}\right)O\left(\frac{1}{m^{2}{h}^{2}}\right)\\
				&+O\left(\prod_{i=0}^{\lfloor \frac{k}{{h}}\rfloor-2}\left(1-\frac{2\delta(\beta_0-\beta_m) f^{*}}{k-i{h}}\right)\right)+O\left(\frac{1}{(k-{h})^{2}}\right).\nonumber
			\end{aligned}
		\end{equation}
	Therefore, by Lemma \ref{lem-convergence}, we finally obtain
		\begin{equation*}
			\mathbb{E}\left[\left\|\tilde{\theta}(k)\right\|^2\right]=
			\left\{
			\begin{aligned}
				&O\left(\frac{1}{k}\right),\ &\mbox{if}\ \frac{2(\beta_0-\beta_m)\delta {f}^{*}}{{h}}>1;\\
				&O\left(\frac{\ln (k)}{k}\right),\ &\mbox{if}\ \frac{2(\beta_0-\beta_m)\delta {f}^{*}}{{h}}=1;\\
				&O\left(\frac{1}{k^{\frac{2(\beta_0-\beta_m)\delta {f}^{*}}{{h}}}}\right),\ &\mbox{if}\ \frac{2(\beta_0-\beta_m)\delta {f}^{*}}{{h}}<1,
			\end{aligned}
			\right.
		\end{equation*}
	This completes the proof.
\end{proof}

\begin{theorem}\label{multi-convergence}
	Under Assumptions \ref{ass-reference-signal}-\ref{ass-multi-noise}, the estimation in \eqref{identification} will convergence to the real parameter $\theta$ in the sense of mean square and almost surely
	\[
	\lim_{k\rightarrow\infty}\mathbb{E}\left[\left\|\tilde{\theta}(k)\right\|^2\right]=0,\quad \lim_{k\rightarrow\infty}\hat{\theta}(k)=\theta,\ a.s.
	\]
\end{theorem}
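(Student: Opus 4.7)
The first assertion (mean square convergence) is an immediate corollary of Theorem \ref{theorem-convergence-2}: in each of the three regimes the rate bound tends to zero as $k\to\infty$, so $\lim_{k\to\infty}\mathbb{E}[\|\tilde{\theta}(k)\|^{2}]=0$.

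For the almost sure assertion my plan is to apply the Robbins--Siegmund non-negative supermartingale convergence theorem to $V(k):=\|\tilde{\theta}(k)\|^{2}$. I would start from the bound \eqref{ineq-4.1} together with the conditional expectation identity \eqref{ineq-4.4}, and decompose $A(k)-\bar{S}(k)$ into its $\mathcal{F}_{k-1}$-predictable part $A(k)-\mathbb{E}[\bar{S}(k)\mid\mathcal{F}_{k-1}]$ plus the martingale difference $\mathbb{E}[\bar{S}(k)\mid\mathcal{F}_{k-1}]-\bar{S}(k)$. The quadratic term in \eqref{ineq-4.1} is bounded deterministically by $M^{2}\,m\sum_{p=0}^{m}\beta_{p}^{2}/k^{2}$ by Proposition \ref{pro-1}(i), and the cross term, after conditioning, is exactly the object computed in \eqref{ineq-4.5}--\eqref{ineq-4.6}. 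This yields the one-step inequality
\begin{equation*}
\mathbb{E}\bigl[V(k)\mid\mathcal{F}_{k-1}\bigr]\leq V(k-1)+\frac{M^{2}\,m\sum_{p=0}^{m}\beta_{p}^{2}}{k^{2}}-\frac{2(\beta_{0}-\beta_{m})f^{*}}{k}\,\tilde{\theta}^{\intercal}(k-1)\phi(k)\phi^{\intercal}(k)\tilde{\theta}(k-1).
\end{equation*}
Since $\sum_{k\geq 1}k^{-2}<\infty$ and the subtracted term is non-negative, this is precisely the hypothesis of the Robbins--Siegmund theorem. It then follows that there exists a non-negative random variable $V_{\infty}$ with $V(k)\to V_{\infty}$ almost surely, and in addition $\sum_{k\geq 1}k^{-1}\,\tilde{\theta}^{\intercal}(k-1)\phi(k)\phi^{\intercal}(k)\tilde{\theta}(k-1)<\infty$ a.s.

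To identify $V_{\infty}$ I would invoke Fatou's lemma together with the mean square statement:
\[
\mathbb{E}[V_{\infty}]=\mathbb{E}\bigl[\liminf_{k\to\infty} V(k)\bigr]\leq \liminf_{k\to\infty}\mathbb{E}[V(k)]=0,
\]
where the final equality uses the first part of the theorem. Since $V_{\infty}\geq 0$, this forces $V_{\infty}=0$ almost surely, which is exactly $\hat{\theta}(k)\to\theta$ a.s.

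I expect the main obstacle to be a careful verification of the supermartingale inequality, specifically the derivation that the conditional expectation of the cross term is non-positive. That step relies on the monotonicity $\beta_{0}>\beta_{1}>\cdots>\beta_{m}$, the uniform positive lower bound $f^{*}$ from Assumption \ref{ass-multi-noise} on the relevant interval (which itself requires that $C_{p}-\phi^{\intercal}(k)\hat{\theta}(k-1)$ and $C_{p}-\phi^{\intercal}(k)\theta$ lie in $[-D_{1},D_{1}]$, an argument that uses Proposition \ref{pro-1}(i) and $\theta,\hat{\theta}(k-1)\in\Omega$), and the application of the mean value theorem as in \eqref{mean-value-4}. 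Once this inequality is in hand, the rest of the argument is the standard Robbins--Siegmund-plus-Fatou route described above.
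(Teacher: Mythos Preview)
Your proposal is correct and follows essentially the same route as the paper: derive the one-step supermartingale inequality (the paper records this as \eqref{ineq-4.8}, simply dropping the non-positive term you keep), invoke Robbins--Siegmund (the paper cites it as Lemma 1.2.2 in \cite{Chen}) to get almost sure convergence of $V(k)$ to some limit, and then use the already-established mean square convergence to identify that limit as zero. The only cosmetic difference is that the paper argues via an a.s.-convergent subsequence extracted from the $L^{2}$ convergence, whereas you use Fatou's lemma---both are standard and equivalent here.
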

\begin{proof}
	By Theorem \ref{theorem-convergence-2}, it reduces that 
	\begin{equation}\label{eq-4.1}
		\lim_{k\rightarrow\infty}\mathbb{E}\left[\left\|\tilde{\theta}(k)\right\|^2\right]=0.
	\end{equation}
	Since $\phi(k)$ and $\hat{\theta}(k-1)$ are $\mathcal{F}_{k-1}$-measurable, we have by \eqref{k-1-ineq}
	\begin{equation}\label{ineq-4.8}
		\begin{aligned}
			\mathbb{E}\left[\left\|\tilde{\theta}(k)\right\|^2\Bigl|\mathcal{F}_{k-1}\right]\leq\left\|\tilde{\theta}(k-1)\right\|^2+\frac{M^2\left(m\sum_{p=0}^m\beta_p^2\right)}{k^2}.
		\end{aligned}
	\end{equation}
	Also
	\[
	\sum_{k=1}^{\infty}\frac{M^2\left(m\sum_{p=0}^m\beta_p^2\right)}{k^2}<\infty.
	\]
	Together with Lemma 1.2.2 in \cite{Chen}, it results in that $\tilde{\theta}(k)$ almost surely converges to a bounded limit. Notice that $\lim_{k\rightarrow\infty}\mathbb{E}\left[\left\|\tilde{\theta}(k)\right\|^2\right]=0$. Then, there exists a subsequence of $\tilde{\theta}(k)$ that almost surely converges to $0$. Consequently, $\tilde{\theta}(k)$ almost surely converges to $0$, i.e.,
	\[
	\lim_{k\rightarrow\infty}\hat{\theta}(k)=\theta,\ a.s.
	\]
	This completes the proof.
\end{proof}

Finally, we give the following result about the closed-loop system consisting of \eqref{multi-system} and \eqref{multi-adaptive-control}.

\begin{theorem}\label{multi-stable}
	Under Assumptions \ref{ass-reference-signal}-\ref{ass-multi-noise}, the adaptive control law \eqref{multi-adaptive-control} is asymptotically optimal, i.e.,
	\[
	\lim_{k\rightarrow\infty}\mathbb{E}\left[(y(k)-y^{*}(k))^2\right]=\mathbb{E}\left[w^2(1)\right].
	\]
\end{theorem}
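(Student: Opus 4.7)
The plan is to reduce the tracking error $y(k)-y^*(k)$ to a sum of a noise term and a term that vanishes in mean square thanks to Theorem \ref{theorem-convergence-2}. First I would exploit the certainty-equivalence design of the controller: by \eqref{multi-adaptive-control}, $\hat{\theta}(1,k-1)u(k)+\hat{\theta}(2,k-1)u(k-1)=y^{*}(k)$, which in compact form reads $\phi^{\intercal}(k)\hat{\theta}(k-1)=y^{*}(k)$. Substituting this into the system equation \eqref{multi-system} yields the key identity
\begin{equation*}
y(k)-y^{*}(k)=\phi^{\intercal}(k)\theta+w(k)-\phi^{\intercal}(k)\hat{\theta}(k-1)=-\phi^{\intercal}(k)\tilde{\theta}(k-1)+w(k).
\end{equation*}

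Next I would square and take expectations. Because the identification update \eqref{identification} and the control recursion \eqref{multi-adaptive-control} make both $\phi(k)$ and $\tilde{\theta}(k-1)$ measurable with respect to $\mathcal{F}_{k-1}=\sigma\{w(s),s\leq k-1\}$, while $w(k)$ is independent of $\mathcal{F}_{k-1}$ with mean zero (Assumption \ref{ass-noise}), the cross term vanishes:
\begin{equation*}
\mathbb{E}\bigl[w(k)\phi^{\intercal}(k)\tilde{\theta}(k-1)\bigr]=\mathbb{E}\bigl[\phi^{\intercal}(k)\tilde{\theta}(k-1)\,\mathbb{E}[w(k)\mid\mathcal{F}_{k-1}]\bigr]=0.
\end{equation*}
Hence $\mathbb{E}[(y(k)-y^{*}(k))^{2}]=\mathbb{E}[(\phi^{\intercal}(k)\tilde{\theta}(k-1))^{2}]+\mathbb{E}[w^{2}(1)]$.

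To finish, I would use the uniformly bounded condition from Proposition \ref{pro-1}(i) together with Cauchy--Schwarz to get
\begin{equation*}
\mathbb{E}\bigl[(\phi^{\intercal}(k)\tilde{\theta}(k-1))^{2}\bigr]\leq \|\phi(k)\|^{2}\,\mathbb{E}\bigl[\|\tilde{\theta}(k-1)\|^{2}\bigr]\leq M^{2}\,\mathbb{E}\bigl[\|\tilde{\theta}(k-1)\|^{2}\bigr],
\end{equation*}
and then invoke Theorem \ref{theorem-convergence-2} (or equivalently the mean-square convergence part of Theorem \ref{multi-convergence}) to conclude that this upper bound tends to zero as $k\to\infty$, yielding the claimed limit.

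The core work is really the preceding results; here the only subtlety is verifying the two measurability/independence facts that kill the cross term, since the controller is a nonlinear functional of past noises through the projection operator. Once that is observed, the proof is a short chain of identifications, and no delicate estimates beyond Proposition \ref{pro-1}(i) and Theorem \ref{theorem-convergence-2} are needed.
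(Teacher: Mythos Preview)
Your proposal is correct and follows the same decomposition as the paper: both use the certainty-equivalence identity $\phi^{\intercal}(k)\hat\theta(k-1)=y^{*}(k)$ to write $y(k)-y^{*}(k)=-\phi^{\intercal}(k)\tilde\theta(k-1)+w(k)$, and both kill the cross term via the $\mathcal F_{k-1}$-measurability of $\phi(k),\tilde\theta(k-1)$ together with $\mathbb E[w(k)\mid\mathcal F_{k-1}]=0$. The only difference is in the final step: you bound $\mathbb E[(\phi^{\intercal}(k)\tilde\theta(k-1))^{2}]\le M^{2}\,\mathbb E[\|\tilde\theta(k-1)\|^{2}]$ and invoke the mean-square convergence of Theorem~\ref{theorem-convergence-2} directly, whereas the paper instead uses the almost-sure convergence $\hat\theta(k)\to\theta$ from Theorem~\ref{multi-convergence}, deduces $\phi^{\intercal}(k)\theta-y^{*}(k)\to 0$ a.s., and then applies the Dominated Convergence Theorem with the uniform bound $|\phi^{\intercal}(k)\theta-y^{*}(k)|\le M\bar M+\bar y^{*}$. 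Your route is slightly more direct. One small slip to fix: in your displayed chain the intermediate quantity $\|\phi(k)\|^{2}\,\mathbb E[\|\tilde\theta(k-1)\|^{2}]$ is ill-formed because $\phi(k)$ is itself random; the clean version is the pointwise Cauchy--Schwarz bound $(\phi^{\intercal}(k)\tilde\theta(k-1))^{2}\le\|\phi(k)\|^{2}\|\tilde\theta(k-1)\|^{2}\le M^{2}\|\tilde\theta(k-1)\|^{2}$ (surely, by Proposition~\ref{pro-1}(i)), after which taking expectations yields the same final inequality.
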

\begin{proof}
	By Theorem \ref{multi-convergence}, we have $\hat{\theta}(k)\rightarrow\theta\ a.s.$ Then based on \eqref{multi-adaptive-control}, we obtain
	\[
	\begin{aligned}
		\phi^{\intercal}(k)\theta-y^{*}(k)=&\phi^{\intercal}(k)\left(\theta-\hat{\theta}(k-1)\right)+\phi^{\intercal}(k)\hat{\theta}(k-1)-y^{*}(k)\rightarrow 0,\ a.s.
	\end{aligned}
	\]
	According to Assumptions \ref{ass-reference-signal}, \ref{ass-multi-bounded} and Proposition \ref{pro-1}, we have
	\[
	\left\|\phi^{\intercal}(k)\theta-y^{*}(k)\right\|\leq \bar{M}M+\bar{y}^{*}.
	\]
	Also, by Lebesgue Dominated Convergence Theorem, we can obtain
	\[
	\lim_{k\rightarrow\infty}\mathbb{E}\left[(\phi^{\intercal}(k)\theta-y^{*}(k))^2\right]=0.
	\]
	Therefore, by \eqref{multi-system} and Assumption \ref{ass-noise}, we have
	\begin{equation*}
		\begin{aligned}
			\lim_{k\rightarrow\infty}\mathbb{E}\left[(y(k)-y^{*}(k))^{2}\right]=&\lim_{k\rightarrow\infty}\mathbb{E}\left[(\phi^{\intercal}(k)\theta-y^{*}(k)+w(k))^2\right]\\
			=&\lim_{k\rightarrow\infty}\mathbb{E}\left[(\phi^{\intercal}(k)\theta-y^{*}(k))^2\right]+\mathbb{E}\left[w^2(k)\right]\\
			=&\mathbb{E}\left[w^2(1)\right].
		\end{aligned}
	\end{equation*}
	This completes the proof.
\end{proof}
\section{Simulations}
In this section, we give a simulation to illustrate the main results in this paper. 

\textbf{Example.} Consider the following stochastic regression model:
\begin{equation}\label{ex-model-1}
	\begin{aligned}
		y(k)=u(k)\theta(1)+u(k-1)\theta(2)+w(k),\	k=1,2,...\nonumber
	\end{aligned}
\end{equation}
where $\phi(k)\triangleq[u(k),u(k-1)]^{\intercal}\in\mathbb{R}^2$ and $\theta\triangleq[\theta(1),\theta(2)]^{\intercal}\in\mathbb{R}^2$ denote the system input and the unknown parameter, respectively, $u(z)=0$ for $z\leq0$, $\{w(k), k\geq 1\}$ be an i.i.d. sequence of standard normal random variables. The multi-threshold of sensor and the corresponding weight of the quantized outputs is $[-\infty,-2,0,2,\infty]$ and $[80,50,-50,-80]$. The reference signal $\{y^{*}(k)\}_{k\geq 1}$ is generated $y^*(2k)=2+e(k)$ and $y^*(2k-1)=1$, where $\{e(k)\}_{k\geq 1}$ is randomly selected in the interval $(0,0.1)$. Set the true parameter $\theta=[4,1]^{\intercal}$ be unknown but known as in ${\Omega}\triangleq\{[x,y]:\  |x|\leq 6, |y|\leq 2\}$ and $N=3$.

Let $\hat{\theta}(0)=[5,0]^{\intercal}$. Then we use the algorithm \eqref{identification}-\eqref{multi-adaptive-control} to estimate the unknown $\theta$ and track the reference signal $\{y^{*}(k)\}$. Figure 1 and Figure 2 show that estimation and tracking effects of the algorithm, respectively. From these, we learn that the estimation algorithm is convergent (see Figure 1(a)), the mean square
convergence speed can reach $O(\frac{1}{k})$ (see Figure 1(b)), and the adaptive tracking control law can also achieve asymptotically optimal tracking for the non-periodic reference signal (see Figure 2). These results are consistent with Theorems \ref{theorem-convergence-2} and \ref{multi-convergence}.
\begin{figure}[htbp]%
	\centering
	\subfigure[Convergence]{
		\includegraphics[
		height=2in,
		width=2.8in]%
		{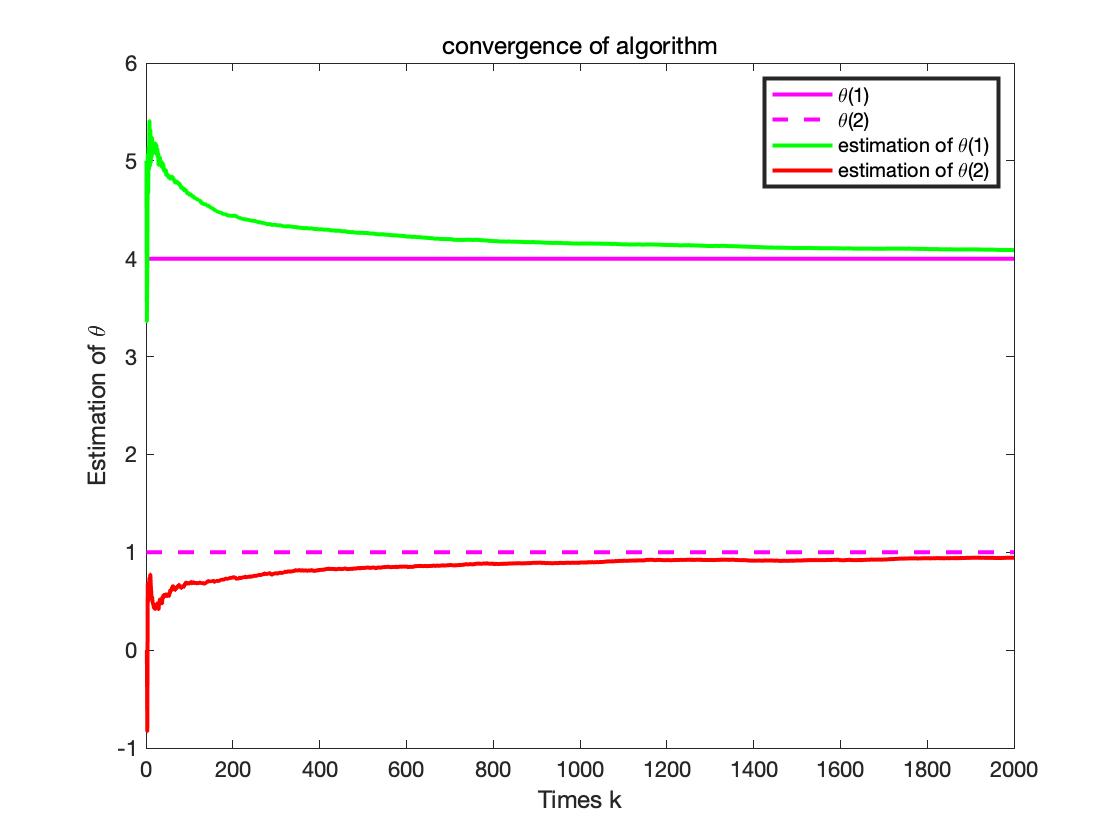}}
	\subfigure[Convergence speed]{
		\includegraphics[
		height=2in,
		width=2.8in]%
		{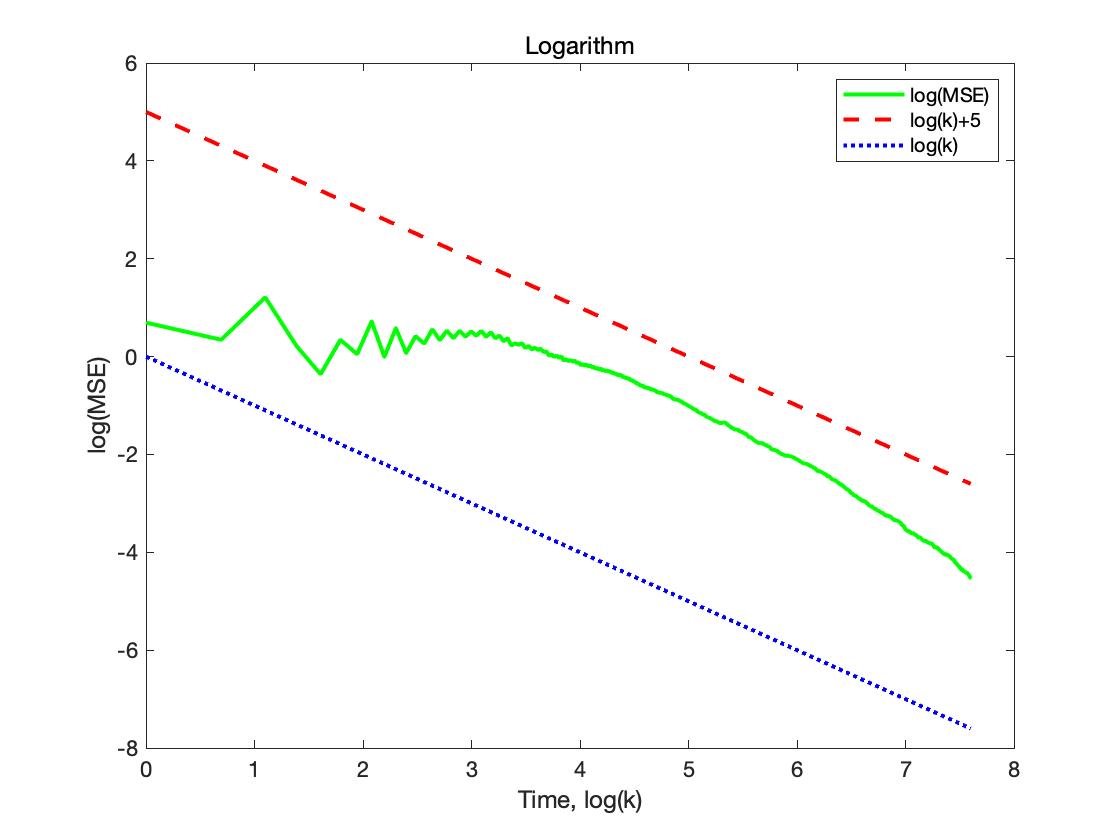}}
	\caption{ The performance of estimations}%
\end{figure}
\begin{figure}[htbp]%
	\centering
	\subfigure[The trajectory of system output and reference signal at time 2k-1]{
		\includegraphics[
		height=2in,
		width=2.8in]%
		{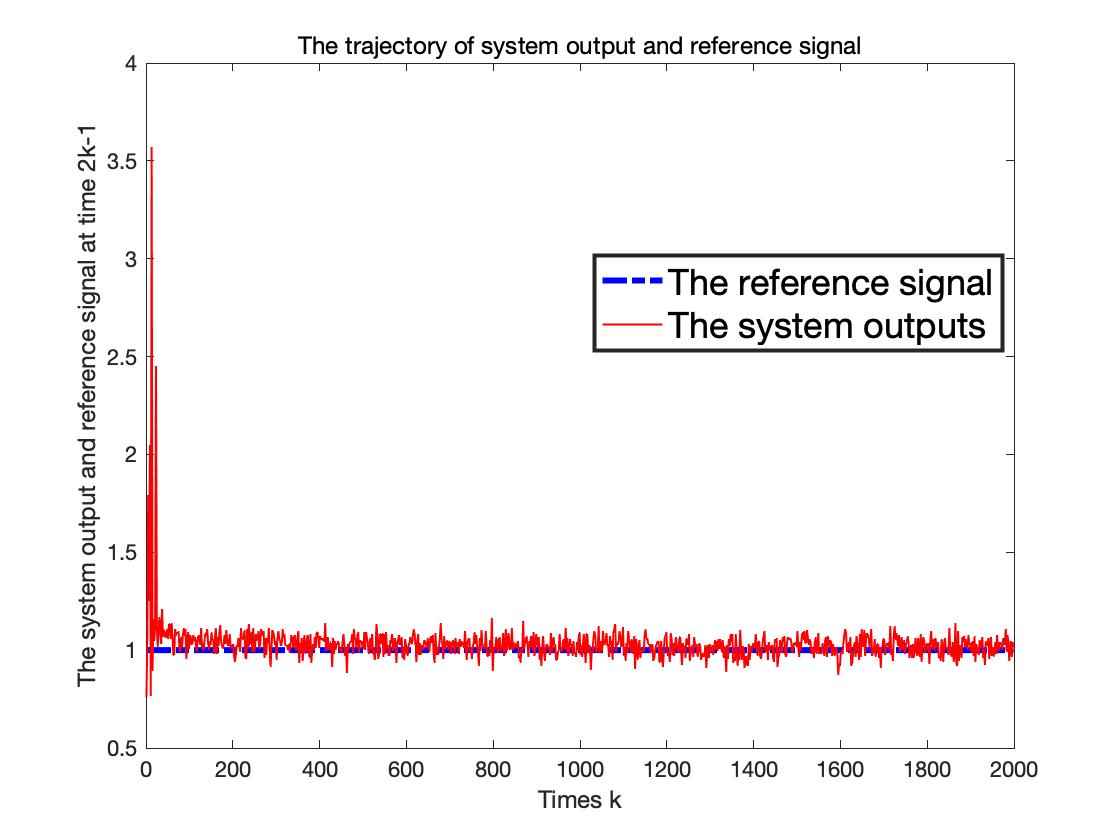}}
	\subfigure[The trajectory of system output and reference signal at time 2k]{
		\includegraphics[
		height=2in,
		width=2.8in]%
		{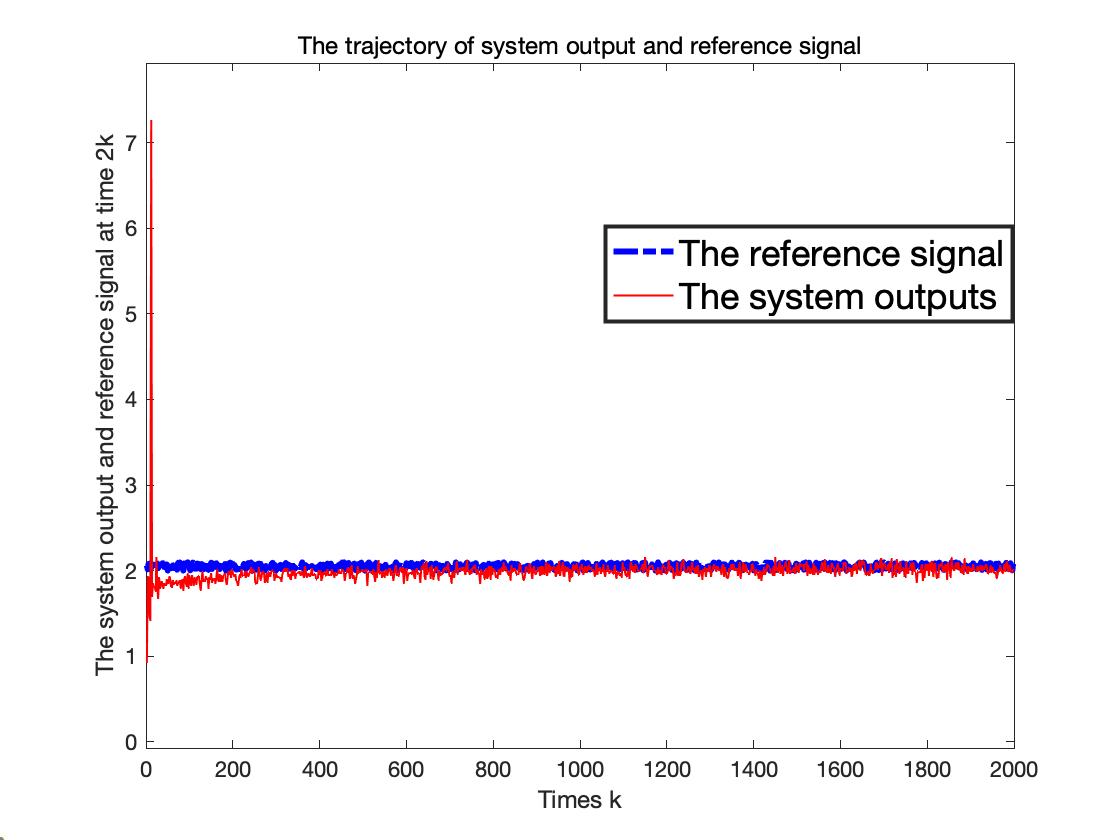}}
	\caption{Tracking effects}%
\end{figure}

\section{Conclusion}
The motivation of this paper is to solve the adaptive tracking control problem under quantized observations for non-periodic reference signals. This paper presents an adaptive tracking control scheme that incorporates an online stochastic approximation-type estimation algorithm, and importantly this result is free from the reliance on periodic reference signals in the existing literature. The methodological innovation of this paper lies in overcoming the difficulty of ensuring that the adaptive tracking control designed by the certainty equivalent principle satisfies uniformly bounded and persistent excitation conditions by introducing two backward-shifted polynomials with time-varying parameters and a special projection structure. Finally, we obtain some results including the convergence speed of the estimation algorithm and asymptotically optimality of adaptive tracking control law for the non-periodic reference signals.

In the future, our work will focused on the following two areas: i) how to ensure that the designed adaptive tracking control satisfies the excitation condition without relying on the projection technique in this paper to achieve asymptotically optimal tracking of the non-periodic reference signal; ii) for more general high-order stochastic regression system containing quantized observations, how to extend the adaptive control scheme proposed in this paper to solve the tracking control problem for non-periodic reference signals.

\end{document}